\newcommand{\mypar}[1]{\vspace{1mm}\textit{#1.}}
\newcommand{\bigex}{%
\mathop{\lower0.75ex\hbox{%
   \scalebox{1.7}{\ensuremath{\exists}}}}\limits}
\newcommand{\rone}{(\emph{i})\xspace}
\newcommand{\rtwo}{(\emph{ii})\xspace}
\newcommand{\rthree}{(\emph{iii})\xspace}
\let\oldnl\nl
\newcommand{\nonl}{\renewcommand{\nl}{\let\nl\oldnl}}
\newcommand{\benchanged}[1]{#1} 
\newcommand{\changed}[1]{#1} 
\newcommand{\xuanyuchanged}[1]{#1} 
\newcommand{\Omit}[1]{}
\newcommand{\cdom}{\mathcal{C}}
\newcommand{\adom}{\mathcal{A}}
\newcommand{\cdomp}[1]{\cdom_{#1}}
\newcommand{\adomp}[1]{\adom_{#1}}
\newcommand{\tf}{f^{\#}}
\newcommand{\besttf}{\widehat{f}^{\#}}
\newcommand{\settf}{\mathcal{F}}
\newcommand{\settfp}[1]{\mathcal{#1}}
\newcommand{\meettf}{\settf_{\sqcap}}
\newcommand{\meettfp}[1]{\settfp{#1}_{\sqcap}}
\newcommand{\powerset}[1]{\mathcal{P}(#1)}
\newcommand{\condtf}{\tf_c}
\newcommand{\soundset}{\settf^s}
\newcommand{\precset}{\settf^p}
\newcommand{\soundsetmeet}{\settf^s_{\sqcap}}
\newcommand{\preccand}{C^p}
\newcommand{\lang}{\mathcal{L}}
\newcommand{\langop}{\mathcal{L}_{op}}
\newcommand{\nstep}{N_{\text{step}}}
\newcommand{\mcmctemp}{T}
\newcommand{\fcpair}[2]{\texttt{ite(}#1\texttt{, }#2\texttt{, }\top\texttt{)}}
\newcommand{\costfunc}{\textnormal{{\texttt{Cost}}}\xspace}
\newcommand{\sound}{\textnormal{{\texttt{Soundness}}}\xspace}
\newcommand{\improvement}{\textnormal{{\texttt{Improvement}}}\xspace}
\newcommand{\norm}[1]{\left\lVert {#1} \right\rVert}
\newcommand{\size}[1]{\left\lvert {#1} \right\rvert}
\newcommand{\normp}[2]{\left\lVert {#1} \right\rVert_{#2}}
\newcommand{\amurth}{\textnormal{{\textsc{Amurth}}}\xspace}
\newcommand{\knownbits}{\texttt{KnownBits}\xspace}
\newcommand{\knownzero}{\texttt{Zero}\xspace}
\newcommand{\knownone}{\texttt{One}\xspace}
\newcommand{\constantrange}{\texttt{ConstantRange}\xspace}
\newcommand{\issound}{\texttt{sound}}
\newcommand{\soundat}{\texttt{soundAt}}
\newcommand{\ucr}{\texttt{CR}_u}
\newcommand{\scr}{\texttt{CR}_s}
\newcommand{\findnewtf}{\texttt{SynthesizeTransformer}\xspace}
\newcommand{\apint}{\texttt{APInt}\xspace}
\newcommand{\name}{\texttt{NiceToMeetYou}\xspace}
\newcommand{\problem}{prob}
\newcommand{\basicops}{\textbf{Basic} }
\newcommand{\bitops}{\textbf{BitExt} }
\definecolor{dred}{rgb}{0.6,0.0,0.0}
\definecolor{dblue}{rgb}{0.0,0.0,0.5}
\definecolor{dgreen}{rgb}{0.0,0.5,0.0}
\definecolor{gray}{rgb}{0.5,0.5,0.5}
\lstdefinelanguage{MLIR}{
    keywords={func, return},
    keywordstyle=\color{dblue}\bfseries,
    morekeywords={[2]KnownBits, bool},
    keywordstyle={[2]\color{dgreen}\bfseries},
    morekeywords={[3]negate, countLeadingZero, setHighBits, makeKnownBits, unsignedLessThan, ite, meet},
    keywordstyle={[3]\color{dred}},
    sensitive=true,
    comment=[l]{//},
    commentstyle=\color{gray},
    basicstyle=\ttfamily\footnotesize,
    escapeinside=``,
    tabsize=3,
    numbers=none,
    xleftmargin=0em
}
\lstdefinelanguage{llvm}{
    morekeywords={[1]if, return},
    keywordstyle={[1]\color{dblue}\bfseries},
    morekeywords={[2]KnownBits, APInt, unsigned, bool},
    keywordstyle={[2]\color{dgreen}\bfseries},
    morekeywords={[3]countLeadingZero, countTrailingZero, setHighBits, setLowBits, isConstant, getConstant, isPowerOf2, max},
    keywordstyle={[3]\color{dred}},
    sensitive=true,
    comment=[l]{//},
    commentstyle=\color{gray},
    basicstyle=\ttfamily\footnotesize,
    escapeinside=``,
    tabsize=3,
    numbers=none,
    xleftmargin=0em
}
\newcommand{\cpp}{\texttt{C++}}
\newcommand{\pct}[1]{\(#1\,\%\)}
\newcommand{\abs}[1]{\lvert #1 \rvert}
\newcommand{\eqdef}{\mathrel{\overset{\mathrm{def}}{=}}}
\let\showappendix\relax 
\newcommand{\ifappendix}[2]{\ifdef{\showappendix}{#1}{#2}}
\newcommand{\whenappendix}[1]{\ifappendix{#1}{}}
\definecolor{darkcolor}{HTML}{efefef}
\definecolor{litecolor}{HTML}{c4eded} 
\newtheorem{definition}{Definition}[section]
\setlist[itemize]{align=parleft,left=0pt..1em, topsep=2pt}
\setlist[description]{topsep=2pt}
\begin{document}

\title{Nice to Meet You: Synthesizing Practical \changed{MLIR} Abstract Transformers}

\author{Xuanyu Peng}
\authornotemark[1]
\orcid{0000-0001-8613-3506}
\affiliation{%
  \institution{University of California San Diego}
  \country{USA}
}
\email{xup002@ucsd.edu}

\author{Dominic Kennedy}
\authornote{Equal contribution}
\orcid{0000-0001-7368-4333}
\affiliation{%
  \institution{University of Utah}
  \country{USA}
}
\email{dominicmkennedy@gmail.com}

\author{Yuyou Fan}
\orcid{0009-0005-5742-0692}
\affiliation{%
  \institution{University of Utah}
  \country{USA}
}
\email{yuyou.fan@utah.edu}

\author{Ben Greenman}
\orcid{0000-0001-7078-9287}
\affiliation{%
  \institution{University of Utah}
  \country{USA}
}
\email{benjamin.l.greenman@gmail.com}

\author{John Regehr}
\orcid{0000-0001-7025-4610}
\affiliation{%
  \institution{University of Utah}
  \country{USA}
}
\email{regehr@cs.utah.edu}

\author{Loris D'Antoni}
\orcid{0000-0001-9625-4037}
\affiliation{%
  \institution{University of California San Diego}
  \country{USA}
}
\email{ldantoni@ucsd.edu}

\begin{abstract}
Static analyses play a fundamental role during compilation: they
discover facts that are true in all executions of the code being
compiled, and then these facts are used to justify optimizations and
diagnostics.
Each static analysis is based on a collection of \emph{abstract
transformers} that provide abstract semantics for the concrete
instructions that make up a program.
It can be challenging to implement abstract transformers that are
sound, precise, and efficient---and in fact both LLVM and GCC have
suffered from miscompilations caused by unsound abstract transformers.
Moreover, even after more than 20 years of development, LLVM lacks
abstract transformers for \benchanged{hundreds of} instructions in its intermediate
representation (IR).

We developed \name: a program synthesis framework for abstract
transformers that are aimed at the kinds of non-relational integer
abstract domains that are heavily used by today's production
compilers.
It exploits a simple but novel technique for breaking the synthesis
problem into parts: each of our transformers is the meet of a
collection of simpler\changed{, sound} transformers that are synthesized
such that each new piece fills a gap in the precision of the final transformer.
Our design point is bulk automation: no sketches are required.
\benchanged{Transformers are verified by lowering to a previously-created
SMT dialect of MLIR\@.}
Each of our synthesized transformers is provably sound
\benchanged{and some~($17\,\%$) are more precise than those provided by LLVM\@.}
\end{abstract}

\begin{CCSXML}
<ccs2012>
   <concept>
       <concept_id>10011007.10011006.10011041</concept_id>
       <concept_desc>Software and its engineering~Compilers</concept_desc>
       <concept_significance>500</concept_significance>
       </concept>
   <concept>
       <concept_id>10003752.10003790.10011119</concept_id>
       <concept_desc>Theory of computation~Abstraction</concept_desc>
       <concept_significance>500</concept_significance>
       </concept>
   <concept>
       <concept_id>10003752.10010124.10010138.10010143</concept_id>
       <concept_desc>Theory of computation~Program analysis</concept_desc>
       <concept_significance>500</concept_significance>
       </concept>
   <concept>
       <concept_id>10011007.10010940.10010992.10010998.10011000</concept_id>
       <concept_desc>Software and its engineering~Automated static analysis</concept_desc>
       <concept_significance>500</concept_significance>
       </concept>
   <concept>
       <concept_id>10011007.10011074.10011092.10011782</concept_id>
       <concept_desc>Software and its engineering~Automatic programming</concept_desc>
       <concept_significance>500</concept_significance>
       </concept>
 </ccs2012>
\end{CCSXML}

\ccsdesc[500]{Software and its engineering~Compilers}
\ccsdesc[500]{Theory of computation~Abstraction}
\ccsdesc[500]{Theory of computation~Program analysis}
\ccsdesc[500]{Software and its engineering~Automated static analysis}
\ccsdesc[500]{Software and its engineering~Automatic programming}

\keywords{\xuanyuchanged{Abstract Interpretation}, Program Synthesis, LLVM}

\maketitle

\section{Introduction}
\label{Se:Introduction}

A modern, highly optimizing compiler runs numerous dataflow analyses on the code that is being compiled; the results of the analyses are used to justify optimizations and diagnostics. 
For example, LLVM relies heavily on a ``\knownbits'' analysis that attempts to prove that individual bits of SSA values are either zero or one in every execution of the program being compiled.

Empirically, the process of engineering a dataflow-driven compiler works as follows.
First, engineers recognize the need for dataflow results and implement the basic analysis structure within the compiler, which is initially highly imprecise because not enough, and not-precise-enough, \textit{abstract transformers} have been written.
Then, optimizations and diagnostics driven by analysis results are added, typically alongside improvements to analysis precision that are necessary to make the compiler operate robustly. 
In a compiler like LLVM, where the IR (intermediate representation) instruction set is large
\changed{(over 400 target-independent instructions and intrinsics)},
this process takes an enormous amount of time and energy.
%
Across 17 different LLVM backends, only four have any abstract transformers at all for LLVM instructions representing target-specific intrinsics, and even those four have poor coverage: 30 out of 1713 intrinsics for x86-64, 2 out of 726 for RISC-V, 2 out of 1286 for AMD GPUs, and 5 out of 1673 for AArch64.
Operations that lack abstract transformers must be analyzed conservatively: they return \emph{top}, the unknown value.
This low coverage can lead to unpredictable compilation effects where, for example, developers who substitute intrinsics for portable code while chasing performance can see degraded dataflow-driven optimizations in nearby code because the portable code could be analyzed but the intrinsics cannot.
Moreover, bugs in unverified dataflow analyses have led to miscompilation errors in both GCC and LLVM\@ ~\cite{Regehr, Lewycky}.

Our work attacks the problem of synthesizing abstract transformers from concrete instruction semantics that are formally specified using an MLIR dialect~\cite{mlir-pldi25}.
Our goal is to develop technologies that can rapidly provide compiler developers with reasonable initial implementations.
We validate our prototype by synthesizing transformers for three non-relational, compiler-friendly abstract domains (\knownbits, signed and unsigned \constantrange) for 39 instructions that are present both in LLVM and in MLIR's Arith dialect.
%
These are formally verified to be sound, and in some cases are more precise than those that are part of LLVM's implementation, which has been tweaked for precision by numerous compiler developers over the last 20 years.

Given a concrete operation $f$ and an abstract domain (e.g., \knownbits), our goal is to synthesize a corresponding abstract transformer $\tf$.
To be sound on a given abstract input, $\tf$ must return an abstract value that over-approximates the set of all possible outputs produced by applying $f$ to any concrete inputs described by its abstract inputs. 
The smaller this over-approximation, the more precise the abstract transformer.
Formally,  
when the power set of concrete values $\powerset{\cdom}$ is related to the set of abstract values (i.e., abstract domain) $\adom$ by a Galois connection $\mathcal{P}(\cdom)
    \mathrel{\substack{\xleftarrow{\gamma} \\[-0.6ex] \xrightarrow[\alpha]{}}}
    \adom$, there is a \emph{best abstract transformer} $\besttf$ that is defined as $\besttf = \alpha \circ \tilde{f} \circ \gamma$, where $\tilde{f}$ runs $f$ on a set of concrete values and produces their corresponding concrete outputs~\cite{DBLP:conf/popl/CousotC77}.
    However, this transformer definition does not directly lead to a usable implementation: it requires taking the meet of a set of abstract values whose size is exponential in the bitwidth of the concrete values being analyzed.

Since it does not seem generally practical to synthesize best abstract transformers, previous research efforts have focused on finding efficient approximations of them.
Some approaches, such as \citet{Scherpelz2007Rewrite} and \citet{DBLP:journals/toplas/ElderLSAR14}, have targeted specific abstract domains. 
\citet{DBLP:journals/pacmpl/KalitaMDRR22} provide a synthesis framework that is applicable to arbitrary domains, but it depends on user-provided program sketches~\benchanged{(see~\Cref{se:evaluation:amurth}\whenappendix{~and~\Cref{app:amurth}})}.

Our work addresses the following research questions \changed{in the context of finite, Galois-connection-based abstract domains~(c.f.~\cite{cc-plilp-1992,DBLP:conf/popl/CousotC77})}:
\begin{itemize}
    \item \textbf{Practicality and Generality:} 
    Using existing formally specified concrete instruction semantics, can we automatically synthesize abstract transformers for multiple domains that are used in real-world compilers? In particular, can we generate functions that compiler developers can adopt---i.e., ones that are free of external dependencies, performant enough for production use, and sensitive to IR-level subtleties such as undefined behavior?
    
    \item \textbf{Soundness and Precision:} 
    How precisely can our synthesized transformers approximate the ideal transformer, $\besttf$, while being provably sound?
    
    \item \textbf{Automation:} \
    Can our synthesis procedure navigate the search space without meaningful help from users? In other words, can we succeed without requiring sketches?
\end{itemize}

We treat synthesis as an optimization problem, where the objective is to find a sound transformer $\tf$ that minimizes the user-given norm $\norm{\tf}$ that measures the imprecision for every possible pair of abstract inputs (note that such a function is easy to implement for given abstract domains).
Since the functions that we wish to synthesize are (empirically) out of direct reach for enumeration or CEGIS, and since we do not wish to rely on user-provided sketches, we use stochastic search techniques inspired by Stoke~(\citet{Alex2013Stoke}), where candidate transformers evolve through a sequence of random modifications guided by the cost function induced by our objective.

We have observed two implementation patterns in code produced by compiler developers writing highly precise abstract transformers for GCC and LLVM\@.
Both of these ended up leading to key aspects of our approach:

\paragraph{\changed{Pattern 1:} Splitting the input space}
Practical abstract transformers often gain precision by making a case split to separately handle different parts of the input space.
For example, LLVM's transformer for bitvector truncation on integer ranges\footnote{\url{https://github.com/llvm/llvm-project/blob/release/20.x/llvm/lib/IR/ConstantRange.cpp\#L864-L915}} begins with special cases for $\top$ and $\bot$ and then subsequently splits on whether the incoming integer range wraps around the $\texttt{UINT\_MAX}\ /\ 0$ boundary.
We observed that the logic at the finest granularity was often reasonably simple, but that the overall transfer functions that we wanted to synthesize appeared to be significantly more complicated than anything we could reliably generate without a sketch.

The insight that allowed us to make progress here was that the meet of a collection of sound abstract transformers is still sound.
Thus, we can synthesize an abstract transformer in parts, and then assemble the parts at the end: $\meettf = \tf_1 \sqcap \cdots \sqcap \tf_n$.
If we simply synthesized a pile of transformers and glued them together, they would be likely to all cover similar parts of the input space (because, e.g., some parts of the input space are easier to cover than others).
We discourage this behavior by \emph{dynamically adapting} our fitness function: each new abstract transformer is rated by its precision on parts of the input space not covered by transformers that have previously been synthesized.
This adaptive strategy steers the synthesis process away from inputs that are already handled precisely and toward those that require new cases.
This decomposition not only makes it easier to synthesize precise transformers with high precision, but also allows users to control the number and size of components in the meet---providing an easy knob for tuning efficiency vs.\ precision.

\paragraph{\changed{Pattern 2:} Separate transformers for separate jobs}
Beyond splitting up the input space, we noticed that realistic transfer functions gain precision by exploiting information that is present in the IR\@.
For example, the LLVM compiler's abstract transformer for integer multiply begins with a large special case for the ``nsw'' or ``no unsigned wrap'' flag\footnote{\url{https://github.com/llvm/llvm-project/blob/release/20.x/llvm/lib/Analysis/ValueTracking.cpp\#L383-L437}}---this flag can be exploited to increase analysis precision, because signed overflows become undefined.
Then, immediately inside the nsw case, the code splits again to handle the case of computing the square of a value, which again affords additional precision.
Our observation is that it is unnecessary and even undesirable to entangle the implementation of the $y * y$ case with the $x * y$ case and the $x\ {*}{_{\footnotesize\mbox{nsw}}}\ y$ case: these are actually distinct transfer functions that need to be handled separately during testing or formal verification.
They happen to be handled by overlapping code only because LLVM's developers decomposed their code that way.
Program synthesis, on the other hand, changes the basic software engineering economics, making it cheap to create a large number of abstract transformers, including a specialized version for every IR-level condition that \benchanged{can} be exploited to increase precision.

\paragraph{Implementation in MLIR}
We get formal semantics for instructions from previous work on the SMT dialect for MLIR~\cite{mlir-pldi25}, which supports lowering operations to both LLVM IR and SMT formulae.
The LLVM IR lowering, and subsequent JIT compilation, enables fast evaluation of candidate transformers during synthesis, and allow us to leverage LLVM's powerful optimizers to improve the performance of the final synthesized transformers.
The SMT lowering allows us to verify the soundness and precision of the synthesized transformers, although in practice we generally measure precision via testing rather than solving because we are interested in giving our stochastic optimizer a hill to climb, rather than a binary result (a model counting solver would be an alternative way to get a hill to climb, but in our experience they do not scale to jobs like this one).

\paragraph{Evaluation}

We evaluate our approach by synthesizing abstract transformers for the abstract domains and operations used in the LLVM IR. The results show that \name complements the precision of LLVM transformers (when measured on 8-bit and 64-bit integers) of \benchanged{7}/\benchanged{47} transformers in the \knownbits domain and \benchanged{19}/\benchanged{47} in \constantrange.
With the addition of a handwritten reduced-product operator for combining synthesized transformers across different abstract domains, our synthesized transformers exceed LLVM's precision on 22/\benchanged{47} operators.

\paragraph{Contributions} Our work makes the following contributions:
\begin{itemize}
    \item We propose a framework for synthesizing abstract transformers that leverages existing formal semantics for instructions, and is not limited to specific abstract domains and does not require program templates (\Cref{se:problem-definition}).
    \item We design an algorithm that incrementally synthesizes the meet of multiple abstract transformers (\Cref{se:framework}), which enables an MCMC-based search procedure that can discover individual smaller transformers that can be added to the meet (\Cref{se:algorithm}).
    \item We implement the algorithm in \name, a tool that effectively balances MCMC-based exploration with SMT-based verification.
    We apply \name on the bread-and-butter abstract domains from LLVM, namely \knownbits and \constantrange~(\Cref{sec:implementation}).
    \item We conduct an evaluation showing how \name can synthesize abstract transformers for real LLVM operators.
    \changed{Our transformers are often complementary to LLVM's,
    and in \benchanged{some} cases
    exceed the precision of LLVM's hand-tuned transformers~(\Cref{se:evaluation}).}
\end{itemize}




\section{Problem Definition and Overview of the Approach}
\label{se:problem-definition}

In this section, we define the problem addressed by our framework using a toy example (\Cref{sec:problem-def-subsec}).
We then provide an excerpt of a real transformer from MLIR, \texttt{urem} over known bits, to illustrate how our problem setting and approach leads to practical gains (\Cref{sec:highlight}).

\subsection{The Transformer Synthesis Problem}
\label{sec:problem-def-subsec}
Throughout this section, we use a running example in which the goal is to synthesize transformers for the integer maximum function $f(x, y) = \max(x,y)$ over the domain of intervals.

The user of our framework needs to provide a definition of a concrete domain, an abstract domain over which they are trying to synthesize abstract transformers, and a language of programming constructs the synthesizer can use to synthesize the abstract transformers.

\mypar{Concrete Domain and Concrete Transformers}
A transformer depends on a concrete domain $\cdom$, and a concrete transformer $f: \cdom^k \to \cdom$.
In our example, we let $\cdom$ be integers of bitwidth up to a certain number (e.g., 32), which can be represented by the \apint class in LLVM and MLIR.\footnote{
Note that the set of integers of bitwidths up to $w$ is \textbf{not} equivalent to the set of integers in the range $[0, 2^w - 1]$.
Instead, it is the union of sets of bitvectors with lengths from $1$ to $w$. In other words, $\cdom = \cdom_{1} \cup \cdom_{2}\cup \cdots \cup \cdomp{w}$, where each $\cdomp{i}$ is a concrete subdomain representing integers with bitwidth $w$, and these subdomains are disjoint. 
We always assume the concrete transformer is only defined on inputs within the same subdomain $\cdomp{i}$.}
%
%
The concrete transformer in our example is $\max: \cdom^2 \to \cdom := \lambda x \lambda y. \texttt{ite}(x>y,x,y)$.

%

\mypar{Abstract Domain} 
\label{se:problem-definition:abstract-domain}
A lattice $\adom$ serves as an abstract domain.
%
%
In this subsection, $\adom$ is the interval domain. Each abstract value $a$ is a pair of integers representing an interval $[a.l, a.r]$.\footnote{\changed{As we only consider bounded integers of specific bitwidths, the abstract domain is restricted to closed intervals and is therefore finite.} Readers may also notice that $a.l$ and $a.r$ should have the same bitwidth as the concrete values included by that interval.
Strictly speaking, given the full concrete domain $\cdom = \cdom_{1} \cup \cdom_{2}\cup \cdots \cup \cdomp{w}$, each concrete subdomain $\cdomp{i}$ has its own abstract domain $\adomp{i}$, each equipped with its own top and bottom elements.
The full abstract domain $\adom$ is likewise the disjoint union of all abstract subdomains $\adomp{i}$.
The top element of $\adom$ will only be reached when joining two abstract values from two different $\adomp{i}$, and the bottom element will only be reached when meeting two abstract values from two different $\adomp{i}$.
However, in practice, the abstract transformers only need to be defined on inputs from the same $\adomp{i}$.}

In particular, the user needs to provide the implementations of the following components:
\begin{itemize}
    \item Concretization function $\gamma: \adom \to 2^{\cdom}$. In our example, $\gamma(a) = \{a.l, a.l+1\cdots{}, a.r\}$.
    \item Meet function $\sqcap: \adom \times \adom \to \adom$. In our example, $\sqcap$ is the intersection of two intervals, i.e., \changed{$a \sqcap b = \textbf{if}\; \max(a.l, b.l) > \min(a.r, b.r) \;\textbf{then}\; \bot \;\textbf{else}\; [\max(a.l, b.l), \min(a.r, b.r)]$}.
    \item Join function $\sqcup: \adom \times \adom \to \adom$. In our example, $\sqcup$ is the union of two intervals, i.e., $a \sqcup b = [\min(a.l, b.l), \max(a.r, b.r)]$.
    \item Single value abstraction function $\beta: \cdom \to \adom$.  In our example, $\beta(x) = [x,x]$.
\end{itemize}

The general abstraction function $\alpha: 2^{\cdom} \to \adom$ that maps a set of concrete elements to their abstract one is defined as $\alpha(C) := \bigsqcup_{x \in C} \beta(x)$.
The partial order $\sqsubseteq \in \adom \times \adom$ that relates abstract elements is defined as follows: $a_1 \sqsubseteq a_2 \iff a_1 \sqcap a_2 = a_1$.

\mypar{Language} 
In our setting, a domain-specific language (DSL) $\lang$ is a context-free grammar of the form $E := a \mid c \mid op(E_1, \cdots, E_k)$, where $a$ ranges over abstract input variables, $c$ denotes constants drawn from a fixed set $C$, and $op$ is an operator drawn from a predefined set of function symbols $\langop$ supported by the DSL.
In our example, $\langop\cup C$ consists of the set 
$\{+, -, \&, |, \min, \max, [\cdot,\cdot], \cdot.l, \cdot.r, \texttt{Zero}, \texttt{AllOnes}\}$, 
where $[\cdot, \cdot]$ denotes interval construction, and $\cdot.l$ and $\cdot.r$ access the left and right endpoints of an interval, respectively. The \texttt{Zero} and \texttt{AllOnes} constants return all zeros and all ones at the given bitwidth.
For instance, the function 
$\tf(a, b) := [\min(a.l, \texttt{AllOnes}),\ \max(a.r, \texttt{Zero})]$ 
is a valid program in the language $\lang$.

\mypar{Soundness and Precision}
Our goal is to synthesize a set of \emph{sound} (i.e., valid overapproximations of the function behavior)
and \emph{precise} (tight) abstract transformers $\settf = \{\tf_1, \tf_2, \cdots, \tf_n\}$ expressed using DSL operators.
%
Because the user of the framework provides the meet operation for the abstract domain as input, we can then compute the meet of all such transformers $\meettf$ as follows:

\changed{\begin{definition}[Meet of Transformers]
\label{def:meet-tf}
Given two abstract transformers $\tf_1, \tf_2 : \adom^k \to \adom$, 
their \emph{meet} is defined as the transformer $\tf_1 \sqcap \tf_2 : \adom^k \to \adom$ such that, for all $a_1, \dots, a_k \in \adom$:
$(\tf_1 \sqcap \tf_2)(a_1, \dots, a_k) 
\;=\; \tf_1(a_1, \dots, a_k) \;\sqcap\; \tf_2(a_1, \dots, a_k)$.
We define the meet of a set of transformers $\settf \subseteq (\adom^k \to \adom)$ as $\meettf \;=\; \bigsqcap_{\tf \in \settf} \tf$.
\end{definition}}

\changed{Intuitively, the meet of two transformers is their pointwise meet in the abstract domain, which is both sound and represents the most precise possible combination of the synthesized transformers.}\footnote{Ideally, the set of abstract transformers would be a singleton $\{\besttf\}$.
However, the theoretical best transformer $\besttf$ might not be expressible in the DSL, or may be too complex and thus computationally expensive for static analysis.}

One can check that a transformer is sound using the concretization function $\gamma$ as follows:

\begin{definition}[Soundness of Transformers]
\label{def:soundness}
A transformer $\tf: \adom^k \to \adom$ is \emph{sound} with respect to a concrete function $f: \cdom^k \to \cdom$, \changed{denoted by $\issound(\tf)$, if it is sound on all abstract inputs}, i.e.,
\changed{
\[
\issound(\tf) \eqdef
\forall a_1, \dots, a_k \in \adom.\quad 
\{ f(c_1,\dots,c_k) \mid c_i \in \gamma(a_i) \}
\;\subseteq\; 
\gamma(\tf(a_1,\dots,a_k)).
\]
}
\end{definition}
For precision, there is no easy way to check that a transformer is the most precise possible among those expressible in a given language (a problem as hard as checking unrealizability in program synthesis~\cite{hu2019unrealizability}).
\changed{We address this practical problem by introducing a precision measure; namely,}
%
we ask the user to provide a norm function $\norm{\cdot}: (\adom^k \to \adom) \to \mathbb{N}$ that quantifies the imprecision of a transformer.
%
The objective of our problem is to minimize the norm of the synthesized transformer $\meettf$.
%
%
While a norm function over transformers can be hard to define, 
in our implementation, we ask the user to provide a size function $\size{\cdot}: \adom \to \mathbb{N}$ on abstract values,
and derive the norm via the size function as $\norm{\tf} = \sum_{a\in \adom} \size{\tf(a)}$.
%
In practice, the size $\size{a}$ can be set as any function that is monotonic with respect to the actual size of the concretization set $\gamma(a)$.
\changed{For the interval domain, we define the size as the $\log_2$ of its length, e.g., $\size{[0,7]} = \log_2(8) = 3$.}
%

Because computing the sum of norms over the entire abstract domain can be expensive, we will often approximate the size by evaluating it over a representative subset $\adom' \subseteq \adom$. We use $\normp{\tf}{\adom'} = \sum_{a\in \adom'} \size{\tf(a)}$ to denote the approximate norm over the subdomain $\adom'$.
In our example, the subset $\adom'$ can be all abstract values represented by integers up to a smaller bitwidth, or some sampled abstract values represented by integers at a large bitwidth.

\paragraph{\changed{Norms vs. Metrics}}
\changed{Initially, we formalized precision using distance metrics rather than norms.
This approach turned out to be problematic because one would need to compute the distance between a synthesized transformer $f_{synth}^\sharp$ and the theoretically best transformer $\besttf$ by enumerating all abstract values and their concretizations, and computing the join of $\beta(f(c))$ for each concrete value.
This is feasible only at small bitwidths.
That being said, prior works use metrics~\cite{logozzo2009measure,Casso2020Measure,Campion2023Measure}
and may provide a way
to explore non-integer or infinite abstract domains in the future.
}

\subsubsection{Problem Definition}
%
We are now ready to define the problem solved in this paper:

\begin{definition}[Transformer Synthesis Problem]
\label{def:prob-def}
    Given a concrete transformer $f: \cdom^k\to \cdom$, an abstract domain 
    $(\adom, \top, \gamma, \sqcap, \sqcup, \beta)$, a norm function $\norm{\cdot}:  (\adom^k \to \adom) \to \mathbb{N}$, and a DSL $\lang$, the \textit{transformer synthesis problem} is to find a set of transformers $\settf = \{\tf_1, \tf_2, \cdots, \tf_n\}$ in $\lang$ such that
    \begin{itemize}
        \item Their meet \changed{$\meettf$} is sound: $\issound(\meettf)$.
        \item The norm of \changed{$\meettf$} is minimal\benchanged{, i.e., there is no sound set of transformers $\settfp{G}$
        such that $\norm{\meettfp{G}} < \norm{\meettf}$.}
              
        \item No $\tf_i \in \settf$ is redundant:
              $\forall \tf_i\in \settf, \exists \vec{a} \in \adom^k,   \left(\bigsqcap_{\tf \in {\settf \setminus \{\tf_i\}}}\tf(\vec{a})\right) \not\sqsubseteq \tf_i(\vec a)$.
    \end{itemize}
\end{definition}
The first requirement, that the meet of the $n$ transformers is sound, is satisfied if all $n$ transformers are sound.
The second requirement asks that the final meet be as precise as possible, rather than requiring each individual transformer to be.
The last requirement ensures every transformer in the final solution contributes to improving the precision.\footnote{\citet{DBLP:journals/pacmpl/KalitaMDRR22} propose a similar definition for the problem of synthesizing \textit{one} most precise abstract transformer in a given DSL.
In their setting, precision is defined in absolute terms with respect to the $\sqsubseteq$ partial order.
In our setting, precision is defined in terms of a size function over the abstract domain.
Furthermore, our definition extends to the set of synthesizing multiple incomparable transformers.
We further discuss these implications in \Cref{se:related-work}.
}

Returning to our running example $f(x, y) = \max(x, y)$, \Cref{fig:toy-example} shows the problem inputs and one possible set of output transformers.
Smart readers may observe that the best transformer has a succinct representation: $\tf(a, b) = [\max(a.l, b.l), \max(a.r, b.r)]$.
However, the meet of the four output transformers in \Cref{fig:toy-example} is also equivalent to this best transformer—since the maximum of their left endpoints equals $\max(a.l, b.l)$ and the minimum of their right endpoints equals $\max(a.r, b.r)$.
Although some individual transformers contain ``unnecessary fragments'' such as $a.l \& b.r$, each transformer still has a smaller size than the best transformer as a single monolithic expression.

\begin{figure}[t]
\centering
\begin{minipage}[t]{0.6\textwidth}
\begin{tcolorbox}[colback=gray!5!white, colframe=gray!75!black, title=Input]
\vspace{-6pt}
\begin{align*}
\text{Concretization:} \quad & \gamma([a.l, a.r]) = \{a.l, \cdots, a.r\} \\
\text{Meet:} \quad & a \sqcap b = [\max(a.l, b.l), \min(a.r, b.r)] \\
\text{Join:} \quad & a \sqcup b = [\min(a.l, b.l), \max(a.r, b.r)] \\
\text{Abstraction:} \quad & \beta(x) = [x, x] \\
\text{Concrete op:} \quad & f(x, y) = \max(x, y) \\
\text{DSL ops:} \quad & \{+, -, \&, |, \min, \max, \cdots\} \\
\text{Size:} \quad & \size{a} = \lfloor\log_2(\abs{a.l - a.r})\rfloor
\end{align*}
\end{tcolorbox}
\end{minipage}
\begin{minipage}[t]{0.38\textwidth}
\begin{tcolorbox}[colback=gray!5!white, colframe=green!25!black!75, title={Output \changed{(\name)}}]
\vspace{-11pt}
\begin{align*}
\\
\tf_1(a, b) &= [\texttt{Zero}, \max(a.r, b.r)] \\
\tf_2(a, b) &= [a.l~\&~b.l, \texttt{AllOnes}] \\
\tf_3(a, b) &= [a.l, a.l \mid b.l] \\
\tf_4(a, b) &= [b.l, \texttt{AllOnes}]\\
\\
\end{align*}
\end{tcolorbox}
\end{minipage}
\caption{Input and output for the transformer synthesis problem on a toy example.}
\label{fig:toy-example}
\end{figure}

\subsection{Case Study: Synthesizing a Precise Transformer for \texttt{urem}}
\label{sec:highlight}

To illustrate the practical capabilities of \name, we present a detailed case study drawn from our evaluation: synthesizing a \knownbits transformer for the \texttt{urem} (unsigned remainder) operation. This example demonstrates how \name synthesizes precise and non-trivial transformers that both match and exceed hand-written implementations in LLVM.

\subsubsection{Background: The\/ \knownbits Domain and the\/ \texttt{urem} Operator}

The \knownbits abstract domain models partial bit-level knowledge of integer values using two disjoint bitvectors: \knownzero and \knownone. A bit is definitely zero if set in \knownzero, definitely one if set in \knownone, and unknown if unset in both. This domain is widely used in compiler optimization passes such as those in LLVM.

We focus on the unsigned remainder operation \texttt{urem}, which computes $L \bmod R$ for unsigned integers $L$ (dividend) and $R$ (divisor). Optimizing the transformer for \texttt{urem} is challenging due to its non-linear behavior and the many edge cases involving known bits.

\subsubsection{Reference Implementation: LLVM's Hand-Written Transformer}

\begin{figure}
\begin{lstlisting}[language=llvm, numbers=none]
KnownBits urem(KnownBits L, KnownBits R) {  
  // Part 1:
  unsigned RTrailingZeros = R.Zero.countTrailingZero();
  APInt Mask = setLowBits(0, RTrailingZero);
  APInt knownZero = L.Zero & Mask;
  APInt knownOne = L.One & Mask;
  // Part 2:
  if (R.isConstant() && R.getConstant().isPowerOf2()) { ... }
  // Part 3:
  unsigned Leaders = max(L.Zero.countLeadingZero(), R.Zero.countLeadingZero());
  knownZero = setHighBits(knownZero, Leaders);
  return {knownZero, knownOne};
}
\end{lstlisting}
\caption{The \knownbits transformers for \texttt{urem} operator in LLVM}
\label{fig:llvm-urem}
\end{figure}

LLVM includes a hand-written \knownbits transformer for \texttt{urem}, shown in \Cref{fig:llvm-urem}. It is implemented using utility functions from the \texttt{APInt} library, such as:
\rone \texttt{countTrailingZero(x)} and \texttt{countLeadingZero(x)}: compute the number of trailing or leading zeros, and \rtwo \texttt{setHighBits(x, k)} and \texttt{setLowBits(x, k)}: construct bitvectors with the highest or lowest $k$ bits set to 1.

This transformer ``applies'' three heuristics:
\rone If the divisor $R$ is a multiple of $2^k$ (i.e., it has $k$ trailing zeroes), then the lowest $k$ bits of the result are equal to those of $L$.
\rtwo If the divisor \texttt{R} is known to be a constant, some special-case handling is applied.
\rthree Since $L \bmod R < \min(L, R)$, the number of leading zeros in the result must be at least as large as in both $L$ and $R$.

\subsubsection{Synthesized Transformer in MLIR: Matching and Extending LLVM}

\Cref{fig:mlir-urem} shows the transformer synthesized by \name in MLIR form. The solution \texttt{@solution} computes the meet of nine independently synthesized transformer candidates \texttt{@f1} through \texttt{@f9}.

Upon manual inspection, we find that 8/9 synthesized components recover key heuristics from the LLVM implementation. 
The remaining one \benchanged{is a new heuristic}.

\mypar{Recovering Existing Heuristics}
As an example, transformer \texttt{@f1} matches LLVM’s third heuristic. It computes the number of leading zeros in the dividend \texttt{\%L}, uses that to set the highest bits in a bitvector, and constructs a \knownbits value accordingly. This sound heuristic (also used by LLVM) encodes the fact that the result of \texttt{urem} must have at least as many leading zeros as the dividend.

\mypar{Discovering New Heuristics}
Transformer \texttt{@f2} illustrates synthesis that eludes human intuition. 
It defines a condition \texttt{@f2\_cond} and a guarded body \texttt{@f2\_body}, returning \texttt{ite(@f2\_cond, @f2\_body, \%top)} so the body applies only when the condition holds.
The body \texttt{@f2\_body} \benchanged{returns the dividend as the result.
This is unsound in general, however,} the synthesizer simultaneously generates a guard \texttt{@f2\_cond} that ensures soundness: the transformer is only used when the dividend's maximum possible value is less than the divisor's minimum possible value. In that case, the remainder equals the dividend, and the transformer is sound.

This heuristic---``if the dividend is provably less than the divisor, then \texttt{urem(L, R)} equals \texttt{L}''---is absent from the LLVM transformer, highlighting the power of synthesis in discovering useful but overlooked cases. That synthesis outputs the code realizing this case is the cherry on top.

Our synthesized transformer is more precise than the LLVM implementation---according to our precision metric---but also complementary. That is, the LLVM and synthesized transformers are incomparable: neither subsumes the other. Their meet yields a strictly more precise transformer.

This case study highlights three key outcomes:
\begin{enumerate}
\item \name recovers hand-crafted compiler heuristics automatically.
\item \name discovers new, sound heuristics that are absent in existing implementations.
\item The transformers synthesized by \name can offer strictly better precision when combined with human-crafted ones.
\end{enumerate}

This example demonstrates how \name can serve as a practical, drop-in synthesis engine for compiler frameworks such as MLIR, producing transformers that are not only correct and efficient but also competitive with and complementary to those written by domain experts.

\begin{figure}
\begin{lstlisting}[language=MLIR, numbers=none]
func.func @f1(%L : KnownBits, %R : KnownBits) -> KnownBits {
    %1 = countLeadingZero(%L.zero) 
    %knownZero = setHighBits(0, %1)
    return makeKnownBits(%knownZero, 0)
}
func.func @f2_cond(%L: KnownBits, %R: KnownBits) -> bool {
    %Lmax = negate(%L.zero)
    %Rmin = %R.one
    %cond = unsignedLessThan(%Lmax, %Rmin)
    return %cond
}
func.func @f2_body(%L : KnownBits, %R : KnownBits) -> KnownBits {
    return %L
}
func.func @f2(%L : KnownBits, %R : KnownBits) -> KnownBits {
    return ite(@f2_cond(%L, %R), @f2_body(%L, %R), %top)
}
...
func.func @solution(%L : KnownBits, %R : KnownBits) -> KnownBits {
    return meet(@f1(%L, %R), ... , @f9(%L, %R))
}
\end{lstlisting}
\caption{Our synthesized \knownbits transformers for \texttt{urem}, written in MLIR.}
\label{fig:mlir-urem}
\end{figure}


\section{An Ideal Algorithm for the Transformer Synthesis Problem}
\label{se:framework}

In this section, we present an idealized synthesis algorithm for solving the transformer synthesis problem~(\Cref{def:prob-def}).
We will provide a practical instantiation using MCMC search in \Cref{se:algorithm}.

Our algorithm draws inspiration from recent approaches for synthesizing the most precise conjunctive specifications~\cite{DBLP:journals/pacmpl/ParkDR23,loudfull}. Rather than generating an entire conjunction in one step, these methods iteratively synthesize individual conjuncts, ensuring that each new conjunct strictly improves the overall precision.
We adopt a similar strategy, tailored to the transformer setting, by incrementally synthesizing individual transformers that refine precision.

\subsection{Synthesizing One Transformer at a Time}
In our setting, meet operations over transformers play the role of conjunctions, and individual transformers correspond to conjuncts. 
\Cref{alg:idealized-algo} maintains a set of sound, incomparable transformers $\soundset$, initialized to the empty set, representing the most imprecise transformer $\top$. 
The algorithm iteratively synthesizes new transformers that, when combined via meet, minimize the imprecision measured by the norm function.
It loops as long as it can find a precision improvement:

\begin{algorithm}[htbp]
    \footnotesize
    {\it
        \caption{IdealSynthesizeTransformers(\problem) \label{alg:idealized-algo}}
        \DontPrintSemicolon
        
        \textbf{Input:} $\problem$ — An instance of the Transformer Synthesis Problem. \\
        \textbf{Output:} $\soundset$ — A set of synthesized transformers solving $\problem$. \\
    
        \SetKw{Break}{Break}
        $\soundset \gets \varnothing$\tcp*{Initialize to most imprecise transformer set}
        
        \While{true}{         
            $f \gets \changed{\findnewtf}(\soundset, \problem)$ \tcp*{Synthesize transformer maximizing precision gain}
            \If{$\norm{f\sqcap \soundsetmeet }= \norm{\soundsetmeet}$}{
                \Return \texttt{RemoveRedundant}($\soundset$) \tcp*{Termination: remove transformers, preserving norm}
                \tcp*{(compute a set greedily; it may not be unique)}
            }
            $\soundset \gets \soundset \cup \{f\}$ \tcp*{update set of synthesized transformers}
        }        
    }
\end{algorithm}


A key advantage of this iterative approach is the reduction of the synthesis problem to repeatedly generating individual transformers. Specifically, rather than directly synthesizing a full set $\soundset$ minimizing the norm $\norm{\soundsetmeet}$, each iteration seeks a transformer $f$ that minimizes the norm when added to the set of transformers $\soundset$ we already have synthesized:

\changed{
\begin{equation}
\label{eq:one-best-condition}
\underset{f \in \lang,\; \issound(f)}{\operatorname{minimize}}
\;\; \norm{f \sqcap \soundsetmeet}.
\end{equation}}

This structure simplifies the search space and modularizes synthesis, as each $\findnewtf(\soundset, \problem)$ call focuses solely on the next incremental improvement.

\begin{theorem}[Soundness]
\label{thm:soundness}
If \findnewtf synthesizes a single sound transformer that is a solution to \Cref{eq:one-best-condition} and there exists a finite solution to the transformer synthesis problem, then \Cref{alg:idealized-algo} returns a solution to the transformer synthesis problem.
\end{theorem}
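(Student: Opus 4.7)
The plan is to verify that the set returned by \Cref{alg:idealized-algo} satisfies the three clauses of \Cref{def:prob-def}: the meet is sound, its norm is minimal, and no transformer in the returned set is redundant.

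Soundness and termination are the easy parts. Every transformer added to $\soundset$ is sound by hypothesis on \findnewtf, and by \Cref{def:meet-tf} the pointwise meet of sound transformers is sound; \texttt{RemoveRedundant} preserves soundness because it only drops a transformer when the meet's value is unchanged on every input. For termination, the abstract domain is finite (as declared in \Cref{se:problem-definition:abstract-domain}), so $\norm{\soundsetmeet}$ ranges over $\mathbb{N}$, and the algorithm exits the loop precisely when no single-step decrease is possible, while each non-exiting iteration strictly decreases this quantity. Well-foundedness of $\mathbb{N}$ then yields termination after finitely many iterations.

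The main obstacle is minimality. Suppose for contradiction that at termination $\norm{\soundsetmeet} > n^*$, where $n^* = \norm{\meettfp{G}}$ for some finite sound witness set $\settfp{G} = \{g_1, \dots, g_m\}$ in $\lang$ guaranteed by the theorem's hypothesis. Telescope along $\settfp{G}$: let $N_0 = \norm{\soundsetmeet}$ and $N_i = \norm{\soundsetmeet \sqcap g_1 \sqcap \dots \sqcap g_i}$. Since $N_m \le n^* < N_0$, there is a smallest index $j$ with $N_{j-1} > N_j$, and $N_0 = N_1 = \dots = N_{j-1}$. The delicate step is to lift the norm equality $N_0 = N_{j-1}$ to pointwise equality of abstract values, i.e.\ to show $\soundsetmeet \sqcap g_1 \sqcap \dots \sqcap g_{j-1} = \soundsetmeet$ on every input. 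Each pointwise meet can only refine $\soundsetmeet$ in the $\sqsubseteq$-order, so by strict monotonicity of $\size{\cdot}$ with respect to $\gamma(\cdot)$ (an implicit strengthening of the monotonicity assumption in \Cref{se:problem-definition}), equality of the summed sizes forces equality at each abstract input. Granting this, $\norm{\soundsetmeet \sqcap g_j} = N_j < N_0$, so the single sound DSL transformer $g_j$ strictly reduces the norm of $\soundsetmeet$ --- directly contradicting the termination condition of \findnewtf that no such improvement exists.

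Non-redundancy then follows from the specification of \texttt{RemoveRedundant}: once the removal check confirms that $\norm{\soundsetmeet}$ is preserved after deletion, the same strict-monotonicity argument lifts norm preservation to pointwise preservation of the meet, matching the third clause of \Cref{def:prob-def}. The resulting set is not necessarily unique --- different deletion orders can produce different irredundant subsets, as flagged in the algorithm's comment --- but any such subset is a valid solution to the transformer synthesis problem.
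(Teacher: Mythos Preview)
Your proposal is substantially more careful than the paper's own proof, which is a three-sentence sketch: it argues termination from the integer-valued decreasing norm, and then simply \emph{asserts} that ``the final set satisfies all three conditions: all transformers are sound, the precision cannot be improved further, and no transformer is redundant'' without further argument. Your termination and soundness paragraphs agree with the paper's reasoning; your telescoping argument for minimality and your treatment of \texttt{RemoveRedundant} supply justifications that the paper entirely omits.

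The one point worth highlighting is exactly the one you already flag: lifting the norm equality $N_0 = N_{j-1}$ to pointwise equality of the meets requires $\size{\cdot}$ to be \emph{strictly} monotone in the lattice order, and the paper only stipulates (non-strict) monotonicity with respect to $|\gamma(\cdot)|$. In fact the paper's own example size function, $\lfloor \log_2(\text{length}) \rfloor$ for intervals, is not strictly monotone (e.g.\ $[0,4] \sqsubset [0,6]$ yet both have size $2$). So your caveat is not cosmetic---without it, the step from ``no single $g_j$ lowers the norm'' to ``no finite set $\{g_1,\dots,g_m\}$ lowers the norm'' genuinely fails in principle, and the paper's sketch does not close this gap either. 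Your proof is therefore correct modulo the explicitly named assumption, and it surfaces a real subtlety the paper glosses over.
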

\begin{proof}
Let $\settf$ be a solution to the synthesis problem such that for every $\norm{\settf}=k$.
Because every call to \findnewtf reduces the norm (and the norm is an integer), the algorithm terminates.
Furthermore, the final set satisfies all three conditions: all transformers are sound, the precision cannot be improved further, and no transformer is redundant.
\end{proof}
\changed{Note that any intermediate result of the algorithm is
a valid transformer, though (probably) sub-optimal}.
This property is important for our MCMC-based approach in \Cref{se:algorithm}.

\subsection{Focusing Precision on Relevant Inputs}
\label{sec:focusing-precision}

A further benefit of this approach is the ability to focus synthesis efforts on the inputs that matter, that is, inputs where the current transformer set $\soundset$ is still imprecise.

While soundness requires that synthesized transformers behave correctly on all inputs, precision only needs to improve where existing transformers leave room for refinement. Thus, each $\findnewtf$ call can restrict its optimization to the subset of ``imprecise'' inputs:
\begin{equation}
\adom_\text{imprecise} = \adom \setminus \{ a \mid \soundsetmeet(a) = \besttf(a)) \}.
\end{equation}
We can equivalently rewrite the objective for the next transformer $f$ as:

\changed{
\begin{equation}
\underset{f \in \lang}{\operatorname{minimize}}
\;\; \quad \normp{f \sqcap \soundsetmeet}{\adom_\text{imprecise}}
\end{equation}}

Here, $\normp{\cdot}{\adom_\text{imprecise}}$ computes norm considering only inputs in $\adom_\text{imprecise}$. This refinement preserves correctness while avoiding wasted effort on already-precise regions of the input space.
Formally,
\changed{
$\underset{f \in \lang}{\operatorname{minimize}}\quad  \norm{f \sqcap \soundsetmeet}
\iff
\underset{f \in \lang}{\operatorname{minimize}}\quad \normp{f \sqcap \soundsetmeet}{\adom_\text{imprecise}}$.
}

This selective focus is efficient (it allowing us to compute the norm on fewer inputs) and also aligns with standard optimization principles: prioritize areas of maximum potential gain.

\section{Randomly Searching for Abstract Transformers using MCMC}
\label{se:algorithm}

The ideal algorithm presented in \Cref{se:framework} incrementally synthesizes abstract transformers that collectively \changed{form the solution to the transformer synthesis problem (\Cref{def:prob-def})}. To make this process practical, we implement the core routine $\findnewtf(\soundset, \problem)$ using a stochastic search procedure $\texttt{MCMCSynthesizeTransformer}(\soundset, \problem)$ based on Markov Chain Monte Carlo (MCMC).
\changed{Our approach is inspired by Stoke~\cite{Alex2013Stoke} and contributes a novel cost function and abductive refinement strategy.}

\begin{algorithm}[htbp]
    \footnotesize
    {\it
        \caption{$\texttt{\changed{MCMCSynthesizeTransformer}}(\soundset, \problem)$}
        \label{alg:mcmc-best-transformer}
        \DontPrintSemicolon

        \textbf{Input:} 
        $\soundset$ — Current set of synthesized transformers; \\         
        \textbf{Output:} A new sound transformer $f \in \lang$ that (in the limit) minimizes precision. \\


        \SetKwProg{Fn}{fun}{:}{}
        \SetKw{None}{None}
        
        \Fn{$\costfunc(\tf)$ \label{alg:mcmc-best-transformer-cost}}{
        \Return $\lambda(1{-}\sound(\tf)) + \kappa(1 {-} \improvement(\tf, \soundset))$ 
        \tcp*{reward soundness, precision improv.}
    }

    $f \gets \texttt{initialize}()$ \tcp*{random initial program} \label{alg:mcmc-best-transformer-init}

    \For{$i \gets 1$ \KwTo $\nstep$}{ \label{alg:mcmc-best-transformer-loop}
        $f' \gets \texttt{mutate}(f)$ \tcp*{mutate current candidate} \label{alg:mcmc-best-transformer-mutate}

        $p \sim \mathcal{U}(0, 1)$ \tcp*{sample acceptance threshold}

        \If {$\costfunc(f) - \costfunc(f') > \mcmctemp \cdot \log(p)$}{ \label{alg:mcmc-best-transformer-accept}
            $f \gets f'$ \tcp*{accept proposed candidate}
        }
    }
    
    \If {$\sound(f) < 1$}{ \label{alg:mcmc-best-transformer-fallback}
        \Return $\top$ \tcp*{return trivial top transformer if no sound one found}
    }
    
    \Return $f$ \tcp*{return the lowest cost sound transformer found}
}
\end{algorithm}

The goal of this random search, presented in \Cref{alg:mcmc-best-transformer}, is to synthesize a transformer that minimizes the cost function defined in Line~\ref{alg:mcmc-best-transformer-cost}. The cost combines two objectives: maximizing soundness and minimizing norm. 
%
$\sound(f)$ returns a number between 0 and 1 representing the fraction of inputs on which the transformer $f$ is sound.
$\improvement(f, g)$ returns a number between 0 and 1 representing how much $f \sqcap g$ improves the precision of $g$.
Formally, we define a predicate $\soundat(f, a) := \besttf(a) \sqsubseteq f(a)$ indicating transformer $f$ is sound at an abstract input $a$. As we target only finite abstract domains, \sound and \improvement can be defined as in \Cref{eq:soundness} and \Cref{eq:improvement}, respectively:
\begin{equation}
\label{eq:soundness}
\sound(f) := \left(\sum_{a \in \adom} \mathbf{1}[\soundat(f, a)]\right)/|\adom|
\end{equation}

\begin{equation}
\label{eq:improvement}
    \improvement(f, g) := \left(\sum_{a \in \adom} \mathbf{1}[\soundat(f, a)] \cdot \left(\size{g(a)} -  \size{f(a) \sqcap g(a)}\right) \right) / \normp{g}{\adom}
\end{equation}

Note that \improvement counts precision gain only on sound inputs.
If $f$ is sound over the entire abstract domain, we have $\improvement(f, g) = 1 - \normp{f\sqcap g}{\adom}/\normp{g}{\adom}$
because $\Sigma_{a \in \adom} \size{g(a)} = \normp{g}{\adom}$.

The algorithm begins by sampling a random candidate program $f$ from the DSL $\lang$ (Line~\ref{alg:mcmc-best-transformer-init}). It then performs $\nstep$ iterations of local search, where in each iteration a syntactic mutation produces a new candidate $f'$ (Line~\ref{alg:mcmc-best-transformer-mutate}). If $f'$ has a lower cost than the current candidate, it is accepted; otherwise, it is accepted with a probability determined by the difference in cost and temperature parameter $\mcmctemp$, following the standard Metropolis-Hastings acceptance rule (Line~\ref{alg:mcmc-best-transformer-accept}).

If no sound transformer is found after the search, the algorithm returns the trivial transformer $\top$, representing the most imprecise but sound abstraction (Line~\ref{alg:mcmc-best-transformer-fallback}). Otherwise, the \changed{most precise} sound transformer discovered is returned.

The use of MCMC for synthesizing transformers provides important asymptotic guarantees'
\changed{assuming ergodicity (discussed below)}.
In the limit, the search procedure samples transformers according to a distribution biased toward lower-cost candidates. 
As the number of iterations tends to infinity, the probability of synthesizing a sound transformer that is sound and minimizes norm approaches one.
The following corollary of the standard convergence guarantees for Metropolis-Hastings MCMC~\cite{hastings1970} captures this result:

\begin{corollary}[Asymptotic Optimality of MCMC Search]
\label{thm:mcmc-optimality}
Assume the proposal distribution used by $\texttt{mutate}(\cdot)$ is ergodic over the space of programs expressible in $\lang$.
Then, as $\nstep \to \infty$, with probability approaching 1, $\texttt{MCMCSynthesizeTransformer}(\soundset, \problem)$ returns $f$ satisfying:
\[
\norm{f \sqcap \soundsetmeet} = \min_{\substack{f' \in \lang \\ \sound(f') = 1}} \norm{f' \sqcap \soundsetmeet}.
\]
\end{corollary}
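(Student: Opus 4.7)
The plan is to reduce the claim to the classical convergence theorem for discrete-state Metropolis--Hastings chains. First, I would verify that Line~\ref{alg:mcmc-best-transformer-accept} realizes the Metropolis acceptance rule for the target distribution $\pi(f) \propto \exp(-\costfunc(f)/\mcmctemp)$: sampling $p \sim \mathcal{U}(0,1)$ and accepting whenever $\costfunc(f)-\costfunc(f') > \mcmctemp\,\log p$ is algebraically equivalent to accepting with probability $\min\{1,\exp((\costfunc(f)-\costfunc(f'))/\mcmctemp)\}$. Combined with the hypothesis that $\texttt{mutate}$ induces an ergodic (irreducible, aperiodic, symmetric) proposal kernel on the finite state space of programs expressible in $\lang$, standard results (e.g., \cite{hastings1970}) give that the induced chain is ergodic with unique stationary distribution $\pi$, and that every state is visited infinitely often with probability $1$.

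Second, I would identify $\argmin_{f \in \lang} \costfunc(f)$ with the set of optimal sound transformers. Since $\costfunc(f) = \lambda(1-\sound(f)) + \kappa(1-\improvement(f,\soundset))$, choosing $\lambda$ sufficiently large relative to $\kappa$ ensures that every unsound program has strictly greater cost than every sound one, so the argmin lies within $\{f \in \lang \mid \sound(f)=1\}$. For such sound $f$, by \Cref{eq:improvement} one has $\improvement(f,\soundset) = 1 - \normp{f \sqcap \soundsetmeet}{\adom}/\normp{\soundsetmeet}{\adom}$, so minimizing $\costfunc$ over sound programs is equivalent to minimizing $\norm{f \sqcap \soundsetmeet}$, matching the target in the corollary.

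Third, I would combine these pieces: by the recurrence of the ergodic chain, the global minimum of $\costfunc$ is almost surely reached within the trajectory as $\nstep \to \infty$; interpreting the algorithm per its return comment as keeping the lowest-cost sound candidate encountered, the returned $f$ agrees with this minimum with probability approaching $1$, and the fallback branch returning $\top$ is taken with probability tending to $0$ (since sound programs are visited almost surely).

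The main obstacle is the gap between the pseudocode, which literally returns the chain's state after $\nstep$ steps, and the stronger claim that the optimum is returned. With a fixed temperature, the marginal law of the final state converges to $\pi$, not to a point mass on $f^\star$, so the probability of the \emph{final} state being optimal is bounded above by $\pi(f^\star) < 1$. The clean resolution---already suggested by the comment ``return the lowest cost sound transformer found''---is to track the best sound candidate seen so far; alternatively, one could adopt a cooling schedule $\mcmctemp \to 0$ and appeal to simulated-annealing convergence, but that requires additional regularity conditions on the schedule and on the proposal that are not stated in the corollary. I would state the chosen disambiguation explicitly before invoking the MH theorem.
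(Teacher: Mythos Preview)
The paper does not actually supply a proof of this corollary: it is stated immediately after a sentence declaring it ``a corollary of the standard convergence guarantees for Metropolis--Hastings MCMC~\cite{hastings1970},'' and is followed only by an informal remark that ``repeated or sufficiently long runs \ldots should yield transformers that are sound and highly precise.'' Your proposal therefore goes well beyond the paper's own treatment, and the three-step decomposition you give (identify the MH target, relate the cost minimum to the norm minimum, invoke recurrence) is a correct and natural way to fill in the argument.

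Two points deserve mention. First, your observation that the corollary only holds under an implicit constraint on $\lambda$ relative to $\kappa$ (so that every unsound program has strictly larger cost than every sound one) is a genuine prerequisite that the paper leaves unstated; your bound $\lambda > \kappa\,|\adom|$ follows from $\sound(f) \le 1 - 1/|\adom|$ for unsound $f$ over a finite domain, and is exactly the kind of hypothesis a rigorous proof would need to surface. Second, the obstacle you flag in your final paragraph is real and is simply glossed over by the paper: with fixed $\mcmctemp$, the chain equilibrates to $\pi$ rather than to a Dirac mass at the minimizer, so returning the \emph{last} state does not give probability approaching~$1$ of optimality. Your proposed resolution---read the return statement per its comment ``return the lowest cost sound transformer found'' and appeal to positive recurrence---is the cleanest fix and is almost certainly the intended reading. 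In short, your proposal is correct and in fact more careful than the paper's own justification; you should feel free to state the tracking-best-seen interpretation and the $\lambda$-vs-$\kappa$ hypothesis as explicit assumptions before invoking the Metropolis--Hastings theorem.
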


\changed{
In short, repeated or sufficiently long runs of  $\texttt{MCMCSynthesizeTransformer}$ should yield transformers that are sound and highly precise in terms of the norm function.
Though each invocation of the algorithm is approximate, in expectation the overall result tends to maximum precision.}

\paragraph{\changed{On Ergodicity.}}
\changed{Our setup is designed to provide an ergodic search space.
First, the space is in a simple format parameterized by the operation set,
and further constrained by the program representation:
each operation takes two abstract inputs, consists of a sequence of SSA instructions, and produces one abstract output.
Only the instructions in the middle are subject to mutation.
Second, our mutation strategy ensures the ergodicity of the Markov chain because the two possible mutations are invertible.
Thus, there is a positive probability of transitioning from any program to any other.
Overall, our strategy is similar to the one employed by Stoke~\cite{Alex2013Stoke}.
}

\subsection{Two Strategies for Randomly Sampling Programs}

\begin{algorithm}[tbp]
\footnotesize
{\it
\caption{\benchanged{randomly initializing and randomly mutating transformer operations}}
\label{alg:mcmc-normal}
\DontPrintSemicolon

\SetKwProg{Fn}{fun}{:}{}
\SetKw{None}{None}

\Fn{$\texttt{initializeNormal}()$ \label{alg:mcmc-normal-init}}{
$l\gets$ maximum number of lines  

\For{$i \gets 1$ \KwTo $l$}{
    $op\gets $ \texttt{sample}(operator in $\lang$)

    $a\gets \texttt{sample}(0,i-1)$ \tcp*{only use earlier variable indices or the input variable $x_0$} 

    $b\gets \texttt{sample}(0,i-1)$ 

    $s[i] \gets (x_i = op(x_a, x_b))$
}

\Return $s[1] \cdots s[l]$ 
}
\Fn{$\texttt{mutateNormal}(f)$ \label{alg:mcmc-normal-mutate}}{
$x_i = op(x_a, x_b) \gets$ \texttt{sample}(line in $f$) \tcp*{sample a line of the program to mutate}

$op'\gets $ \texttt{sample}(operator in $\lang$)

$a'\gets \texttt{sample}(0,i-1)$ 

$b'\gets \texttt{sample}(0,i-1)$ 

\Return $f[x_i = op(x_a, x_b)/x_i = op'(x_{a'}, x_{b'})]$ \tcp*{replace $i$-th line with randomly sampled one} 
}
}
\end{algorithm}

We now describe two complementary strategies that we can \benchanged{use in}~\Cref{alg:mcmc-best-transformer} to synthesize transformers via random search. 
The strategies correspond to different implementations of the $\texttt{initialize}$
and $\texttt{mutate}$ procedures in \Cref{alg:mcmc-best-transformer}~(Lines~\ref{alg:mcmc-best-transformer-init} and~\ref{alg:mcmc-best-transformer-mutate}).

\subsubsection{Randomly Mutating Transformer Operations}
\label{sec:randomly-mutating}

The first strategy performs random syntactic mutations to the current transformer by altering its operations or structure (\Cref{alg:mcmc-normal}). 
This corresponds to traditional MCMC-based synthesis, which operates directly over the DSL $\lang$~\cite{Alex2013Stoke}.

For simplicity, the formalization assumes all operations are binary and that $x_0$ is the input variable to the transformer.
In practice there are more possible inputs as well as constants that can be used in the transformer.
\Cref{alg:mcmc-normal} defines two core functions: \texttt{initializeNormal} and \texttt{mutateNormal}, which build and mutate transformers as sequences of statements.
The initialization function creates a transformer with $l$ statements (this is a parameter in our implementation).
Each statement $s[i]$ assigns a variable $x_i$ by applying a randomly selected operator from $\lang$ to two variables previously defined variables at lower indices. This ensures well-formed data dependencies.

The mutation function (\texttt{mutateNormal}) selects a random statement, sampled using a user-given probability distribution, and replaces it with a newly sampled statement, again selecting operator and operands consistent with variable dependencies.

This setup supports incremental local modifications during MCMC sampling, enabling the search to explore transformer candidates efficiently and effectively.

For example, the following transformer from \Cref{fig:mlir-urem} is a valid initial transformer (we avoid using variable indices for readability and instead use actual variable names):
\begin{lstlisting}[language=MLIR, numbers=none]
func.func @f1(%L : KnownBits, %R : KnownBits) -> KnownBits {
    %1 = countLeadingZero(%L.zero) 
    %knownZero = setHighBits(0, %1)
    return makeKnownBits(%knownZero, 0)
}
\end{lstlisting}

That same transformer could be mutated into the following one by replacing the argument of \texttt{countLeadingZero} with a different input:
\begin{lstlisting}[language=MLIR, numbers=none]
func.func @f1(%L : KnownBits, %R : KnownBits) -> KnownBits {
    %1 = countLeadingZero(%R.zero)  // mutated
    %knownZero = setHighBits(0, %1)
    return makeKnownBits(%knownZero, 0)
}
\end{lstlisting}

\subsubsection{Randomly Adding Conditions to Unsound but Precise Transformers}
\label{sec:abduction}
Throughout a typical random search in which \Cref{alg:mcmc-best-transformer} uses the mutation strategy from \Cref{sec:randomly-mutating}, the algorithm discovers transformers that significantly improve precision but are not sound. 
Such transformers potentially contain valuable information, but cannot be used directly.
To address this issue, we use our MCMC approach to implement abductive synthesis, which constructs guards that identify subsets of the input space.
Given an unsound transformer $\tf$, we aim to synthesize a guard $c$ (written in the DSL $\lang$) and thereby construct a sound transformer $\condtf(a) := \fcpair{c}{\tf(a)}$.
We denote a conditional transformer as a term $\fcpair{c}{\tf}$.

\begin{algorithm}[tbp]
\footnotesize
{\it
\caption{\texttt{initialize} and \texttt{mutate} for condition abduction}
\label{alg:mcmc-abd}
\DontPrintSemicolon

\SetKwProg{Fn}{fun}{:}{}
\SetKw{None}{None}

\Fn{$\texttt{initializeAbd}()$ \label{alg:mcmc-abd-init}}{
    $\preccand\gets$ history of precise but unsound transformers encountered so far
    
    $f_u \gets$ random element from $\preccand$ \tcp*{select unsound but precise transformer} 
    
    $c \gets \texttt{initializeNormal}()$  \tcp*{sample a random condition expressible in $\lang$ using \Cref{alg:mcmc-normal}}
    
    \Return $\fcpair{c}{f_u}$ 
} 
\Fn{$\texttt{mutateAbd}(\fcpair{c}{f_u})$ \label{alg:mcmc-abd-mutate}}{
    $c' \gets \texttt{mutateNormal}(c)$  \tcp*{randomly mutate $c$ using DSL operators in $\lang$ using \Cref{alg:mcmc-normal}}
    \Return $\fcpair{c'}{f_u}$ 
}
}
\end{algorithm}

The \texttt{initializeAbd} and \texttt{mutateAbd} operations in \Cref{alg:mcmc-abd} implement this abductive synthesis by reusing the initialization and mutation operations in \Cref{alg:mcmc-normal} to explore the space of possible conditions $c$ starting from an unsound transformer.
The remaining structure of \Cref{alg:mcmc-best-transformer} is untouched as the cost function and overall structure of the algorithm are exactly the same regardless of what type of transformer we decide to synthesize.
For example, the following transformer \texttt{f\_2} from~\Cref{fig:mlir-urem} is a valid initial transformer for~\Cref{alg:mcmc-abd}, assuming again proper renaming of variables and that the last variable is being returned:

\begin{lstlisting}[language=MLIR, numbers=none]
func.func @f2_cond(%L: KnownBits, %R: KnownBits) -> bool {
    %Lmax = negate(%L.zero)
    %Rmin = %R.one
    %cond = unsignedLessThan(%Lmax, %Rmin)
    return %cond
}
func.func @f2_body(%L : KnownBits, %R : KnownBits) -> KnownBits { return %L }
func.func @f2(%L : KnownBits, %R : KnownBits) -> KnownBits {
    return ite(@f2_cond(%L, %R), @f2_body(%L, %R), %top)
}
\end{lstlisting}

That same transformer could be mutated by modifying \texttt{f2\_cond}, e.g., by changing the second argument of \texttt{unsignedLessThan}, but the body \texttt{f2\_body} would remain unchanged:
\begin{lstlisting}[language=MLIR, numbers=none]
func.func @f2_cond(%L : KnownBits, %R : KnownBits) -> KnownBits {
    %Lmax = negate(%L.zero)
    %Rmin = %R.one
    %cond = unsignedLessThan(%Lmax, %R.zero)  // mutated
    return %cond
}
\end{lstlisting}

\section{Implementation}
\label{sec:implementation}


We have implemented \name as a modular and extensible synthesis framework that supports several \changed{finite} non-relational integer abstract domains and concrete operations.
\Cref{f:implementation} presents a high-level overview of the system architecture.
This section first describes how we instantiate the core synthesis algorithm to support different domains and instruction semantics (\Cref{s:impl:basic}), and then outlines key engineering optimizations that enable efficient large-scale synthesis (\Cref{s:impl:speed}).
The implementation is publicly available~\cite{xdsl-smt-artifact}.


\subsection{The Basic Ingredients}
\label{s:impl:basic}

Synthesis of an approximation of the ideal transformer $\besttf$ takes place through a two-loop process:
\rone A slow \textit{outer loop} performs bounded model checking (via z3) to validate soundness, and
\rtwo A fast \textit{inner loop} generates new candidates and \changed{aggressively tests their soundness and precision by sampling the space of abstract values.}
In practice, we run 1,000 inner-loop iterations per outer-loop iteration. After each outer iteration, the framework discards provably unsound candidates, \changed{adds sound candidates to the solution set $\soundset$,} and updates weights in the probability distribution used to sample MCMC mutations.

While the ultimate goal is to produce a set of sound abstract transformers $\soundset$, our implementation also maintains a set of unsound-but-precise functions, $\precset$, as well.
The purpose of $\precset$, as outlined in~\Cref{sec:abduction}, is to discover sound functions in a bottom-up way \changed{from a precise function $\tf \in \precset$ by synthesizing a condition $c$ that narrows the input space to a subset on which $\tf$ is sound.}
Concretely, the goal is to find $c$ such that $\condtf(x) := \fcpair{c(x)}{\tf(x)}$ is sound.
%

Accordingly, the system performs two interleaved forms of MCMC-guided synthesis. In each round of the inner loop:
\begin{itemize}
\item \changed{When discovering new sound transformers starting from a transformer $\tf$, synthesis mutates $\tf$ itself.}
\item \changed{When discovering guard conditions for existing transformers $\tf \in \precset$, synthesis keeps $\tf$ fixed and mutates $c$ in $\condtf(x) := \fcpair{c(x)}{\tf(x)}$.}
\end{itemize}

When the process converges or time runs out, \changed{we return the meet of all sound candidates $\soundsetmeet$.}
The precise set $\precset$ is discarded.

\subsubsection{Input Language}
\label{s:impl:input}

Our target DSL $\lang$ for synthesis is the MLIR dialect \texttt{xdsl-smt}~\cite{xdsl-smt}.
This language has several important characteristics:
(1) it includes basic numeric operations ($+$, $/$, \texttt{ite}), making it suitable to express a variety of transformers;
(2) it uses SSA form, which narrows the scope of possible mutations;
(3) it implements SMT-Lib~\cite{barrett2010smt} and has a direct mapping to z3; and
(4) it has a straightforward mapping to \cpp{}, enabling fast testing of transformers in the inner loop.

To instantiate our framework, users must provide the semantics of the concrete operation for which a transformer is to be synthesized, and the definition of the abstract domain over which the transformer operates.
\name targets one concrete operation at a time in its outer loop (hence the multi-arrow in~\Cref{f:implementation}).
Multiple loops can of course run in parallel.

For each concrete operation, the user provides both a z3 implementation (for \changed{verification in the outer loop}) and a corresponding \cpp{} implementation (for fast evaluation \changed{in the inner loop}).
Users \benchanged{must ensure} these are in agreement.
Operations may additionally specify input constraints \changed{prohibiting abstract values}---e.g., integer division excludes zero denominators.
For each abstract domain, the user supplies a top element, a meet operation, a concretization function (technically, only a membership test: the function is given a concrete value $v$ and an abstract value $A$, and must determine whether $v \in \gamma(A)$), a size function to guide precision, and an optional well-formedness constraint to rule out syntactically valid but semantically invalid abstract values (e.g., for \knownbits{}, the two bitvectors representing known-zero and known-one must not overlap).
Information about concrete operations can be reused across abstract domains.
Information about abstract domains can potentially be reused across different sets of concrete operations.

Turning back to our toy example from \Cref{se:problem-definition}, the following elements suffice to instantiate \name for \texttt{max} on the interval domain.
Each piece must be defined in MLIR:
\begin{minipage}[t]{0.49\columnwidth}
\begin{itemize}
\item Concrete op: $f(x, y) = \max(x, y)$
\item Input constraint: None
\item Top element: $[0, \texttt{UINT\_MAX}]$
\item Meet: $a \sqcap b = [\max(a.l, b.l), \min(a.r, b.r)]$
\end{itemize}\end{minipage}~\begin{minipage}[t]{0.49\columnwidth}
\begin{itemize}
\item Concretization: $\gamma([a.l, a.r]) = \{a.l, \cdots, a.r\}$
\item Size function: $\size{a} = \lfloor\log_2(\abs{a.l - a.r})\rfloor$
\item Well-formedness constraint: $\forall~a.\,a.l \leq a.r$
\end{itemize}\end{minipage}%


\subsubsection{Initializing MCMC in the Inner Loop}
\label{s:impl:inner}

Each iteration of the inner MCMC synthesis loop begins by generating a fresh pool of candidate transformer functions and candidate guard conditions.
Candidate functions are initialized with approximately 30 randomly generated instructions, drawn from the DSL described in \Cref{s:impl:input}. These functions are intended to represent full transformer logic. Candidate conditions, which are used for condition abduction~(\Cref{sec:abduction}), are initialized with around 6 instructions, reflecting their role as lightweight guards.
\changed{In both cases}, mutation attempts to morph \changed{these placeholder candidates into codes that outperform} the current members of $\soundset$.
We chose the constants 30 and 6 empirically; similar values work as well.

The scoring function used to evaluate candidates balances two objectives: soundness and precision. Two tunable parameters, $\kappa$ and $\lambda$~(from~\Cref{alg:mcmc-best-transformer}), govern this tradeoff. During the early stages of synthesis, we prioritize exploratory behavior by assigning a higher weight to precision for functions ($\lambda$) and to soundness for conditions ($\kappa$).
These weights encourage exploration; later on, the verifier prunes unsound functions and the scoring function de-emphasizes overly precise conditions.
The relative value of these parameters is important; the exact value is not so much.

\subsubsection{Mutation}
\label{s:impl:mutation}

Mutation is the simplest component of the implementation. It operates directly on MLIR programs expressed in the \texttt{xdsl-smt} dialect. Each candidate is a sequence of SSA assignments of the form $x_i = op(x_a, \ldots)$ where $op$ is an \texttt{xdsl-smt} operator, $i$ is the current line number, and $a$ is some index preceding $i$.
There are two possible mutations: replace one variable reference $x_a$ with another, or replace the operator with another $op'$ (with randomly selected arguments).
Our DSL consists of 29 operators in total, each of which corresponds to a function in \apint library.

Variables are always selected uniformly at random.
Each index in the range $[0, i)$ has equal probability.
Operators are initially chosen uniformly at random, but their weights get adjusted after each update to the sound set $\soundset$.
The operators that are most common across the sound functions have the highest chance of selection. 
Concretely, $\emph{weight}(op) := 1 + \emph{frequency}(op, \soundset)$, where $\emph{frequency}(op, F)$ is the total number of $op$ appearances in all transformers in $F$. 

Weights never decrease below a positive minimum, ensuring all operators maintain a nonzero probability of selection and preserving exploration diversity throughout synthesis.

\begin{figure}[t]\centering

  \begin{tikzpicture}
  \node (mlir) [align=center,fill=darkcolor] {{}\\ MLIR \\ xdsl-smt~\cite{xdsl-smt} \\};
  \node (rbracket) [above=0pt of mlir.north,anchor=south] {$\underbrace{\mbox{\begin{tabular}{c}Abstract Domains,\\ Concrete Ops\end{tabular}}}$};
  \node (synth) [right=0.7cm of mlir.east,anchor=west,align=center,fill=litecolor] {MCMC \\ Synthesis};
  \node (c0) [right=1.5cm of synth.east, anchor=west] {};
  \node (ctop) [above=0pt of c0.north, anchor=south, rectangle, draw] {\begin{tabular}{c}Candidate \\ Functions\end{tabular}};
  \node (cbot) [below=0pt of c0.south, anchor=north, rectangle, draw] {\begin{tabular}{c}Candidate \\ Conditions\end{tabular}};
  \node (verif) [right=1.5cm of c0.east, anchor=west, align=center, fill=darkcolor] {Verifier \\ z3};
  \node (d0) [right=0.5cm of verif.east, anchor=west] {~};
  \node (dtop) [above=-4pt of d0.north, anchor=south] {$\soundset$};
  \node (dbot) [below=-4pt of d0.south, anchor=north] {$\precset$};
  \node (halt) [right=0.4cm of d0.east, anchor=west, diamond, draw] {Halt?};
  \node (end) [right=0.2cm of halt.east, anchor=west] {$\meettf^s$};

  \node (mlir0) [above=2pt of mlir.east] {};
  \node (mlir1) [above=-2pt of mlir.east] {};
  \node (mlir2) [above=-6pt of mlir.east] {};
  \node (synth0) [above=2pt of synth.west] {};
  \node (synth1) [above=-2pt of synth.west] {};
  \node (synth2) [above=-6pt of synth.west] {};
  \draw [->] (mlir0) -- (synth0);
  \draw [->] (mlir1) -- (synth1);
  \draw [->] (mlir2) -- (synth2);
  \draw [->] (synth.east) -- (ctop.west);
  \draw [->] (synth.east) -- (cbot.west);
  \draw [->] (ctop.east) -- (verif.west);
  \draw [->] (cbot.east) -- (verif.west);
  \draw [->] (verif.east) -- (halt.west);
  \draw [->] (halt.east) -- (end.west);
  \node (feedback0) [above=0.7cm of halt] {};
  \node (feedback1) [above=1cm of synth] {};
  \draw [-] (halt.north) -- (feedback0.center);
  \draw [-] (feedback0.center) -- (feedback1.center);
  \draw [->] (feedback1.center) -- (synth.north);

  \node (boxnw) [below=0.9cm of synth.south west, xshift=-1cm] {};
  \node (boxne) [right=3.7cm of boxnw] {};
  \node (boxsw) [below=1cm of boxnw.center] {};
  \node (boxse) [below=1cm of boxne.center] {};
  \node (muta) [below=0.4cm of boxnw.center, anchor=west, xshift=0.2cm,fill=litecolor] {Mutate};
  \node (eval) [right=0.6cm of muta.east, anchor=west,fill=litecolor] {Evaluate};

  \draw [-,dashed,ultra thick,litecolor] (synth.south west) -- (boxnw.center);
  \draw [-,dashed,ultra thick,litecolor] (synth.south east) -- (boxne.center);
  \draw [-,ultra thick,litecolor] (boxne.center) -- (boxnw.center);
  \draw [-,ultra thick,litecolor] (boxne.center) -- (boxse.center);
  \draw [-,ultra thick,litecolor] (boxsw.center) -- (boxnw.center);
  \draw [-,ultra thick,litecolor] (boxsw.center) -- (boxse.center);
  \draw [->] (muta.east) -- (eval.west);
  \node (f3) [below=0.2cm of eval] {};
  \node (distfunc) [right=1pt of f3.north, anchor=west] {$\norm{\cdot}$};
  \node (f4) [below=0.2cm of muta] {};
  \draw [-] (eval.south) -- (f3.center);
  \draw [-] (f3.center) -- (f4.center);
  \draw [->] (f4.center) -- (muta.south);

  \end{tikzpicture}
\caption{Implementation overview. The outer loop (above) checks the soundness of promising candidates. The inner loop (below) generates new candidates via mutation and evaluates candidates' soundness and precision.}
\label{f:implementation}
\Description{Flowchart overview of \name{}. Given abstract domains and concrete ops in MLIR, we enter an inner loop that tests candidates. Promising candidates are fed into z3 for verification. We repeat the process and eventually output the meet of the best transformers.}
\end{figure}
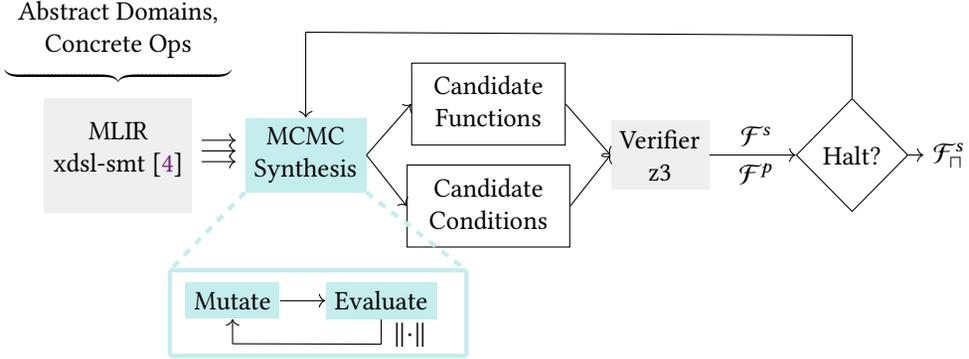

\subsubsection{Evaluation Engine}
\label{s:impl:evaluationengine}

After mutation generates candidate functions and conditions, a fast evaluation engine implemented in \cpp{} tests their soundness and precision by executing the semantics of the candidates. This engine implements the core procedure outlined in~\Cref{alg:mcmc-best-transformer}. We choose \cpp{} for its \benchanged{fast performance and good} integration with LLVM infrastructure---most notably, with LLVM's type \apint for arbitrary-precision integers.

Since our transformers operate over integer domains, the evaluation engine is tailored to efficiently test integer values in several ways:
\begin{itemize}
\item \textbf{Bitvector representation:}  The bitvectors that we use for static analysis are not faithfully represented by standard \cpp{} datatypes such as \texttt{uint64\_t}.
We use LLVM's \apint numbers instead.
Crucially, an \apint can be instantiated at any bitwidth and comes with numerous helper functions, e.g., for overflow checking.
\item  \textbf{Test generation by bitwidth:} Integers can be enumerated in a straightforward way. This property guides test generation:
\begin{itemize}
      \item For small bitwidths (range $[1,4]$), evaluation tests \changed{{all possible abstract values} and concrete inputs.} Candidate functions are thus guaranteed to be sound on small bitwidths.
      \changed{\emph{Examples:}~For \knownbits{}, there are 120 small abstract values.
      For \constantrange{}, there are 185 abstract values.
      To test a transformer $f^\sharp$, we compute $a = f^\sharp(a_0, a_1)$ for all pairs of abstract values $a_0, a_1$, and, unless $a$ is bottom,
      we check that $f(n_0, n_1) \in a$ for all pairs of concrete numbers.}
      
      \item For mid-size bitwidths ($[5,8]$), evaluation samples abstract values $a$ and exhaustively tests the concretization of $a$. Candidate functions are sound over sampled tests, but may be unsound over the entire abstract domain; z3 verification is there to catch mistakes.
      
      \item For large bitwidths ($[9,64]$), evaluation samples abstract values $a$ and samples elements $c \in \gamma(a)$. Due to this incomplete concretization, candidate functions could be unsound even on sampled abstract values at this stage.
    \end{itemize}
\end{itemize}

The extraction of an MLIR candidate to \cpp{} requires some care beyond the use of \apint values.
Operations that can lead to undefined behavior in \cpp{}, such as division, must be guarded with a check for invalid input and must return a default value that matches the behavior of the z3 verifier.
For division, invalid input is a zero denominator and the z3 default is to return zero.
Our goal here is simply to keep the evaluation engine and the verifier in sync.
If static analysis discovers that a program definitely divides by zero, other passes in LLVM will take appropriate action.


\subsubsection{Size Functions}
\label{s:impl:costfunction}

Each abstract domain has a size function $\size{\cdot}$ to quantify precision:
\begin{itemize}
\item  For the \knownbits domain, the size measures the number of unknown bits. 
For example, the \knownbits abstract value \texttt{01?1?00?} has size $3$.

\item  For the \constantrange domain, the size is computed as the $\lfloor \log_2 \rfloor$ of the absolute difference between the lower and upper bounds of the abstract intervals. 
For example a \constantrange abstract value of \texttt{[15, 45]} has a size of $\lfloor \log_2(30) \rfloor = 4$ 
\end{itemize}

\subsubsection{Integrating MCMC Results in the Outer Loop}

At each outer-loop iteration, the inner MCMC loop generates a collection of candidate transformers and guard conditions, ranked by their size. These candidates are then filtered and integrated as follows:

\begin{itemize}
\item Sound candidates that are not subsumed by any existing member of the solution set $\soundset$ are added to it. Subsumption is determined using the abstract domain’s ordering (e.g., inclusion).
\item Precise but unsound candidates are selectively retained in $\precset$, the pool used for condition abduction (\Cref{sec:abduction}). To control memory and evaluation cost, $\precset$ is capped at a fixed size (15 in our implementation), and retains only the top-scoring elements.
\end{itemize}

This strategy ensures that $\soundset$ grows monotonically with respect to both soundness and variety, while $\precset$ remains a focused source of potentially useful components for future synthesis.

\subsubsection{Verifier}
\label{s:impl:verifier}

Soundness is established via z3. For each candidate produced by the inner loop, we check it across all possible inputs for bitwidths ranging from 1 to 64. This validation ensures that the candidate respects the transformer specification over the full concrete input space.
\changed{If z3 times out, we simply assume the transformer is unsound and discard it.}
To perform this check, we use the lowering to z3 that is provided by the \texttt{xdsl-smt} dialect.
We use this lowering directly, though we did fix several bugs in it along the way~\cite{xdsl-smt-pull73,xdsl-smt-pull74}.
Candidates that pass verification are admitted to $\soundset$ and considered sound in subsequent synthesis and scoring phases.

\subsection{Gotta Go Fast}
\label{s:impl:speed}

\name can find promising transformers only because it can quickly explore a huge number of candidates.
Most candidates are garbage.
Sifting through the trash in a reasonable time is possible thanks to details of its engineering, described below.

\paragraph{Parallel MCMC}
\label{s:impl:parallel}

The MCMC synthesis in our inner loop is a slow, step-by-step process.
To accelerate converge and improve the likelihood of high coverage, we run this inner loop in parallel (currently, 100 times).
Operationally, each time we enter the inner loop with a set of candidate functions, we spawn several copies of MCMC that independently mutate and evaluate transformers.
At the end of the inner loop, the highest-scoring candidates from each process are selected as candidates for the outer loop to verify.
MCMC splits its focus between sound candidates ($\soundset$) and precise candidates ($\precset$) on a 70/30 basis:
\pct{70} of the parallel searches explore sound candidates,
and \pct{30} of the searches perform condition abduction.

\paragraph{LLVM JIT}
\label{s:impl:jit}


Our evaluation engine uses the LLVM JIT, \texttt{LLJIT}~\cite{lljit}, to quickly compile and test candidate functions.
We create a pre-compiled binary that packages the abstract domain, the \apint API, our scoring function, and LLJIT,
and use it to lower candidates written in MLIR to executable code.
Compared to our initial approach of invoking \texttt{clang}, linking with LLVM, and compiling 
the candidates, JIT-ing improves performance by roughly two orders of magnitude.
Testing a batch of roughly 100 candidates dropped from seconds to milliseconds with LLJIT.

\subsection{Generalizability of the Approach}

\changed{We have focused on \knownbits and \constantrange domains because these are the main abstract domains employed by LLVM.
There are no fundamental limitations to adapting \name to other abstract domains:}

\begin{itemize}
\item \emph{Non-relational abstract domains:}
\changed{\name can be directly applied to non-relational domains where each abstract value consists of a finite number of integers, as in \knownbits or non-wrapping interval domains.
Users must provide the components described above~(\Cref{s:impl:input}) and a soundness verifier.
Reuse is made simple because the template of mutated programs begins by deconstructing each input abstract value into integers and ends by reconstructing the output abstract value from several integers, and only the operations in the middle are mutated.
To support non-relational abstract domains that are not composed of integers, significant engineering is needed to define suitable templates and mutation strategies.}

\item \emph{Relational abstract domains:}
\changed{We conjecture that \name can synthesize transformers for relational domains, such as octagons and polyhedra.
We have deferred investigation because we suspect these domains will lead to prohibitively high compilation overhead.
In contrast to a non-relational domain, where program variables can be tracked independently,
a relational domain requires all variables to be tracked simultaneously.
A transformer must operate over entire program states rather than individual variables, which complicates reuse across operators.}

\item \emph{Non-Galois abstract domains:} \changed{
Our abstract domains are characterized by a Galois connection.
This means there is a unique most precise abstraction for each concrete set of integers,
a fact that our inner MCMC loop leverages to speed up evaluation.
To adapt \name to a non-Galois-connection-based abstract domain, the framework
would need a method for choosing among multiple sound abstract values.
Take the wrapped interval domain as an example~\cite{Gange2015Interval}.
There is no most precise abstraction
for integers in a circle; any number can be the left endpoint.
}
\end{itemize}

\changed{
\name targets MLIR, but there is nothing fundamental about our choice to use it.
MLIR is convenient because it has an SMT dialect~\cite{mlir-pldi25},
a dataflow analysis framework with pluggable transfer functions,
and a clear relation with LLVM to facilitate comparisons.
}

\section{Evaluation}
\label{se:evaluation}

\changed{Our evaluation consists of several parts:
a direct comparison between LLVM's \knownbits and \constantrange
transformers and ours~(\Cref{sec:comparison-to-llvm});
an evaluation of our known bits transfer functions vs.\ LLVM's, in
terms of precision and compilation time, when compiling an appropriate
subset of the SPEC 2017 benchmark suite~(\Cref{sec:end-to-end});
evidence that synthesis facilitates exploration on composite transfer
functions and a reduced product domain~(\Cref{sec:opportunities});
ablation studies of DSL choice~(\Cref{se:dsl}) and condition
abduction~(\Cref{se:evaluation:abduction});
and, a detailed comparison to
Amurth~\cite{DBLP:journals/pacmpl/KalitaMDRR22}~(\Cref{se:evaluation:amurth}).
Note that, although we compare against LLVM to show that our work is practical
beating LLVM at its own game is not a goal in and of itself.
\name is a complement to handcrafted analyses.
}

\begin{table}[p] 
\centering
\caption{\knownbits results for 8-bit and 64-bit integers.
The 8-bit results report the percentage of tests where the transformer matches the best transformer $\besttf$.
The 64-bit results report the norm of the transformer \changed{(normalized by the number of \textbf{Tests})} which reflects its imprecision.
\changed{Higher ($\uparrow$) is better for \textbf{exact}, while lower ($\downarrow$) is better for \textbf{norm}.}
\#$\tf$ is the number of transformers in $\soundset$. \#$c$ counts those with conditions. \#inst is the total number of MLIR instructions used by those transformers.
\textbf{Tests} is the number of sampled test inputs. $\top$, synth,
and LLVM present results for three transformers: a trivial one that
always returns top, our synthesized transformer, and LLVM's
manually-implemented transformer. The meet column is for the
transformer {synth $\sqcap$ LLVM}.
\textbf{Bolded} numbers denote cases where the meet of the synthesized transformer and the LLVM one has higher precision than the LLVM one. If no LLVM implementation is available (N/A) we bold cases in which \changed{the synthesized transformer is better than $\top$}.
Some 64-bit rows contain a dash (-) because random sampling failed to produce any valid pairs of inputs. These operators have strict constraints on their inputs.
}
\label{tab:kb-table}
\footnotesize
\setlength{\tabcolsep}{3.5pt} 
\begin{tabular}{@{}|l
                rrr
                |r
                rrrr
                |r
                rrrr@{}}
\toprule
\multirow{2}{*}{\textbf{ConcreteOp}} & \multicolumn{3}{c}{} & \multirow{2}{*}{\textbf{Tests}} & \multicolumn{4}{c}{\textbf{8-bit exact (\%)} $\uparrow$} & \multirow{2}{*}{\textbf{Tests}} & \multicolumn{4}{c}{\textbf{64-bit precision (norm)} $\downarrow$} \\
\cmidrule(lr){2-4} \cmidrule(lr){6-9} \cmidrule(lr){11-14}
 & \#$\tf$ & \#$c$ & \#inst & & $\top$ & synth & llvm & meet & & $\top$ & synth & llvm & meet \\
\midrule
Abds & 10 & 3 & 189 & 1000 & 33.90 & 60.10 & 100.00 & 100.00 & 10000 & 0.059 & 0.050 & 0.000 & 0.000 \\
Abdu & 16 & 4 & 259 & 1000 & 33.10 & 59.40 & 100.00 & 100.00 & 10000 & 0.059 & 0.050 & 0.000 & 0.000 \\
Add & 17 & 2 & 306 & 1000 & 29.60 & 58.70 & 100.00 & 100.00 & 10000 & 0.140 & 0.082 & 0.000 & 0.000 \\
AddNsw & 13 & 2 & 205 & 1000 & 24.50 & 42.00 & 100.00 & 100.00 & 9674 & 0.147 & 0.136 & 0.000 & 0.000 \\
AddNswNuw & 14 & 3 & 220 & 1000 & 7.40 & 45.50 & 100.00 & 100.00 & 7479 & 0.160 & 0.136 & 0.000 & 0.000 \\
AddNuw & 17 & 4 & 291 & 1000 & 15.20 & 53.90 & 100.00 & 100.00 & 8305 & 0.152 & 0.103 & 0.000 & 0.000 \\
And & 1 & 0 & 14 & 1000 & 0.10 & 100.00 & 100.00 & 100.00 & 10000 & 0.625 & 0.000 & 0.000 & 0.000 \\
Ashr & 5 & 3 & 94 & 1000 & 31.10 & 65.50 & 85.70 & 85.70 & 0 & - & - & - & - \\
AshrExact & 7 & 2 & 123 & 1000 & 14.70 & 40.10 & 100.00 & 100.00 & 0 & - & - & - & - \\
AvgCeilS & 8 & 4 & 157 & 1000 & 31.80 & 38.80 & 100.00 & 100.00 & 10000 & 0.139 & 0.136 & 0.000 & 0.000 \\
AvgCeilU & 8 & 6 & 153 & 1000 & 31.60 & 38.60 & 100.00 & 100.00 & 10000 & 0.139 & 0.136 & 0.000 & 0.000 \\
AvgFloorS & 9 & 3 & 163 & 1000 & 32.40 & 39.30 & 100.00 & 100.00 & 10000 & 0.139 & 0.136 & 0.000 & 0.000 \\
AvgFloorU & 7 & 4 & 138 & 1000 & 32.20 & 37.70 & 100.00 & 100.00 & 10000 & 0.139 & 0.136 & 0.000 & 0.000 \\
Lshr & 4 & 1 & 80 & 1000 & 12.30 & 59.30 & 96.50 & 96.50 & 0 & - & - & - & - \\
LshrExact & 6 & 2 & 119 & 1000 & 12.80 & 31.40 & 100.00 & 100.00 & 0 & - & - & - & - \\
Mods & 12 & 3 & 235 & 1000 & 41.20 & 64.70 & 71.30 & \textbf{71.50} & 10000 & 0.090 & 0.078 & 0.076 & 0.076 \\
Modu & 12 & 4 & 207 & 1000 & 16.70 & 59.00 & 52.70 & \textbf{70.60} & 10000 & 0.149 & 0.030 & 0.132 & \textbf{0.027} \\
Mul & 10 & 6 & 202 & 1000 & 25.60 & 60.60 & 73.20 & \textbf{73.30} & 10000 & 0.025 & 0.010 & 0.006 & \textbf{0.006} \\
Or & 1 & 0 & 8 & 1000 & 0.00 & 100.00 & 100.00 & 100.00 & 10000 & 0.624 & 0.000 & 0.000 & 0.000 \\
Sdiv & 11 & 7 & 248 & 1000 & 64.10 & 64.10 & 83.40 & 83.40 & 10000 & 0.331 & 0.331 & 0.114 & 0.114 \\
SdivExact & 2 & 2 & 32 & 1000 & 19.30 & 19.30 & 37.40 & 37.40 & 0 & - & - & - & - \\
Shl & 4 & 1 & 69 & 1000 & 10.50 & 56.90 & 96.50 & 96.50 & 0 & - & - & - & - \\
ShlNsw & 7 & 1 & 115 & 1000 & 6.80 & 26.20 & 100.00 & 100.00 & 0 & - & - & - & - \\
ShlNswNuw & 7 & 2 & 115 & 1000 & 5.50 & 9.80 & 100.00 & 100.00 & 0 & - & - & - & - \\
ShlNuw & 7 & 3 & 139 & 1000 & 10.60 & 40.50 & 100.00 & 100.00 & 0 & - & - & - & - \\
Smax & 9 & 5 & 185 & 1000 & 6.50 & 63.80 & 100.00 & 100.00 & 10000 & 0.348 & 0.095 & 0.000 & 0.000 \\
Smin & 6 & 4 & 119 & 1000 & 6.00 & 72.80 & 100.00 & 100.00 & 10000 & 0.349 & 0.064 & 0.000 & 0.000 \\
SshlSat & 7 & 1 & 124 & 1000 & 37.60 & 72.40 & N/A & \textbf{72.40} & 10000 & 0.624 & 0.109 & N/A & \textbf{0.109} \\
Sub & 12 & 2 & 204 & 1000 & 28.50 & 60.60 & 100.00 & 100.00 & 10000 & 0.140 & 0.088 & 0.000 & 0.000 \\
SubNsw & 16 & 5 & 287 & 1000 & 22.20 & 47.80 & 100.00 & 100.00 & 9673 & 0.146 & 0.103 & 0.000 & 0.000 \\
SubNswNuw & 14 & 6 & 292 & 1000 & 7.60 & 31.80 & 100.00 & 100.00 & 7515 & 0.160 & 0.139 & 0.000 & 0.000 \\
SubNuw & 18 & 4 & 323 & 1000 & 16.40 & 47.10 & 100.00 & 100.00 & 8215 & 0.152 & 0.099 & 0.000 & 0.000 \\
UaddSat & 14 & 5 & 305 & 1000 & 18.30 & 61.80 & 100.00 & 100.00 & 10000 & 0.253 & 0.069 & 0.000 & 0.000 \\
Udiv & 10 & 5 & 191 & 1000 & 2.50 & 80.80 & 89.80 & \textbf{90.90} & 10000 & 0.960 & 0.004 & 0.001 & 0.001 \\
UdivExact & 3 & 2 & 53 & 1000 & 2.80 & 15.30 & 33.90 & \textbf{42.20} & 0 & - & - & - & - \\
Umax & 9 & 6 & 199 & 1000 & 6.40 & 90.60 & 100.00 & 100.00 & 10000 & 0.351 & 0.002 & 0.000 & 0.000 \\
Umin & 6 & 1 & 105 & 1000 & 6.40 & 92.90 & 100.00 & 100.00 & 10000 & 0.351 & 0.001 & 0.000 & 0.000 \\
UshlSat & 2 & 1 & 41 & 1000 & 3.60 & 96.60 & N/A & \textbf{96.60} & 10000 & 1.000 & 0.000 & N/A & {0.000} \\ 
UsubSat & 10 & 6 & 232 & 1000 & 19.00 & 52.10 & 100.00 & 100.00 & 10000 & 0.254 & 0.071 & 0.000 & 0.000 \\
Xor & 3 & 0 & 39 & 1000 & 2.30 & 100.00 & 100.00 & 100.00 & 10000 & 0.390 & 0.000 & 0.000 & 0.000 \\
\bottomrule
\end{tabular}
\end{table}

\begin{table}[t]
\centering
\caption{\constantrange results for 8-bit and 64-bit integers.
\benchanged{See the~\Cref{tab:kb-table} caption for descriptions of the columns, \textbf{exact},
and \textbf{norm}.}
\changed{Higher ($\uparrow$) is better for \textbf{exact}. Lower ($\downarrow$) is better for \textbf{norm}.}
Operators marked by a * use synthesized $\scr$ transformers, others use $\ucr$ transformers.
}

\label{tab:cr-table}
\footnotesize
\setlength{\tabcolsep}{3.5pt}
\begin{tabular}{@{}|l
                rrr
                |r
                rrrr
                |r
                rrrr@{}}
\toprule
\multirow{2}{*}{\textbf{ConcreteOp}} & \multicolumn{3}{c}{} & \multirow{2}{*}{\textbf{Tests}} & \multicolumn{4}{c}{\textbf{8-bit exact (\%)} $\uparrow$} & \multirow{2}{*}{\textbf{Tests}} & \multicolumn{4}{c}{\textbf{64-bit precision (norm)} $\downarrow$} \\
\cmidrule(lr){2-4} \cmidrule(lr){6-9} \cmidrule(lr){11-14}
 & \#$\tf$ & \#$c$ & \#inst & & $\top$ & synth & llvm & meet & & $\top$ & synth & llvm & meet \\
\midrule
Abds* & 3 & 2 & 70 & 1000 & 59.80 & 59.80 & N/A & \textbf{59.80} & 10000 & 0.917 & 0.915 & N/A & \textbf{0.915} \\
Abdu & 20 & 6 & 344 & 1000 & 0.00 & 75.00 & N/A & \textbf{75.00} & 10000 & 0.990 & 0.908 & N/A & \textbf{0.908} \\
Add & 2 & 2 & 45 & 509 & 36.54 & 100.00 & 100.00 & 100.00 & 4991 & 0.949 & 0.887 & 0.887 & 0.887 \\
AddNsw* & 8 & 4 & 148 & 1000 & 7.10 & 100.00 & 100.00 & 100.00 & 9770 & 0.982 & 0.905 & 0.905 & 0.905 \\
AddNswNuw & 10 & 2 & 172 & 1000 & 0.00 & 70.70 & 84.80 & \textbf{88.60} & 8190 & 0.994 & 0.921 & 0.912 & \textbf{0.910} \\
AddNuw & 14 & 5 & 243 & 1000 & 0.00 & 94.00 & 100.00 & 100.00 & 8267 & 0.993 & 0.910 & 0.906 & 0.906 \\
And & 10 & 5 & 193 & 1000 & 0.00 & 82.30 & 83.10 & \textbf{83.30} & 10000 & 0.990 & 0.886 & 0.883 & 0.883 \\
Ashr* & 8 & 3 & 141 & 1000 & 0.00 & 98.00 & 98.50 & \textbf{99.20} & 0 & - & - & - & - \\
AshrExact* & 4 & 2 & 73 & 1000 & 0.00 & 81.70 & N/A & \textbf{81.70} & 0 & - & - & - & - \\
AvgCeilS* & 26 & 2 & 444 & 1000 & 0.00 & 0.30 & N/A & \textbf{0.30} & 10000 & 0.998 & 0.953 & N/A & \textbf{0.953} \\
AvgCeilU & 17 & 3 & 303 & 1000 & 0.00 & 1.80 & N/A & \textbf{1.80} & 10000 & 0.998 & 0.947 & N/A & \textbf{0.947} \\
AvgFloorS* & 20 & 1 & 332 & 1000 & 0.00 & 1.80 & N/A & \textbf{1.80} & 10000 & 0.998 & 0.955 & N/A & \textbf{0.955} \\
AvgFloorU & 17 & 3 & 276 & 1000 & 0.00 & 30.00 & N/A & \textbf{30.00} & 10000 & 0.998 & 0.932 & N/A & \textbf{0.932} \\
Lshr & 1 & 0 & 14 & 1000 & 0.00 & 100.00 & 100.00 & 100.00 & 0 & - & - & - & - \\
LshrExact & 4 & 1 & 78 & 1000 & 0.00 & 85.20 & N/A & \textbf{85.20} & 0 & - & - & - & - \\
Mods* & 6 & 2 & 118 & 1000 & 0.00 & 43.50 & 95.60 & 95.60 & 10000 & 0.997 & 0.909 & 0.875 & 0.875 \\
Modu & 8 & 3 & 164 & 1000 & 0.00 & 89.00 & 88.90 & \textbf{89.00} & 10000 & 0.989 & 0.877 & 0.877 & \textbf{0.877} \\
Mul & 17 & 6 & 371 & 998 & 90.78 & 90.78 & 90.88 & 90.88 & 10000 & 0.861 & 0.861 & 0.861 & 0.861 \\
Or & 14 & 8 & 275 & 1000 & 0.00 & 82.40 & 85.60 & \textbf{86.20} & 10000 & 0.990 & 0.889 & 0.882 & 0.882 \\
Sdiv* & 8 & 3 & 146 & 998 & 0.50 & 12.32 & 100.00 & 100.00 & 10000 & 1.000 & 0.997 & 0.501 & 0.501 \\
SdivExact* & 8 & 8 & 202 & 1000 & 0.70 & 0.70 & N/A & \textbf{0.70} & 0 & - & - & - & - \\
Shl & 8 & 4 & 140 & 1000 & 0.00 & 83.00 & 0.30 & \textbf{83.00} & 0 & - & - & - & - \\
ShlNsw* & 4 & 3 & 81 & 1000 & 0.70 & 18.00 & 99.20 & \textbf{99.40} & 0 & - & - & - & - \\
ShlNswNuw & 7 & 1 & 116 & 1000 & 0.00 & 88.30 & 49.80 & \textbf{100.00} & 0 & - & - & - & - \\
ShlNuw & 7 & 4 & 156 & 1000 & 0.00 & 76.40 & 99.60 & 99.60 & 0 & - & - & - & - \\
Smax* & 1 & 0 & 14 & 1000 & 0.00 & 100.00 & 100.00 & 100.00 & 10000 & 0.996 & 0.838 & 0.838 & 0.838 \\
Smin* & 1 & 0 & 13 & 1000 & 0.00 & 100.00 & 100.00 & 100.00 & 10000 & 0.996 & 0.839 & 0.839 & 0.839 \\
SshlSat* & 7 & 3 & 132 & 1000 & 50.90 & 75.50 & 100.00 & 100.00 & 10000 & 0.497 & 0.486 & 0.243 & 0.243 \\
Sub & 2 & 2 & 39 & 526 & 35.36 & 100.00 & 100.00 & 100.00 & 4997 & 0.948 & 0.887 & 0.887 & 0.887 \\
SubNsw* & 5 & 4 & 106 & 1000 & 7.00 & 100.00 & 100.00 & 100.00 & 9773 & 0.981 & 0.904 & 0.904 & 0.904 \\
SubNswNuw & 3 & 1 & 59 & 1000 & 0.00 & 74.50 & 84.00 & 84.00 & 8152 & 0.993 & 0.917 & 0.912 & 0.912 \\
SubNuw & 14 & 8 & 259 & 1000 & 0.00 & 94.50 & 100.00 & 100.00 & 8304 & 0.993 & 0.910 & 0.906 & 0.906 \\
UaddSat & 6 & 1 & 93 & 1000 & 0.00 & 99.20 & 100.00 & 100.00 & 10000 & 0.995 & 0.756 & 0.755 & 0.755 \\
Udiv & 7 & 1 & 126 & 1000 & 0.00 & 31.70 & 100.00 & 100.00 & 10000 & 1.000 & 0.033 & 0.006 & 0.006 \\
UdivExact & 5 & 2 & 95 & 1000 & 0.00 & 34.30 & N/A & \textbf{34.30} & 0 & - & - & - & - \\
Umax & 1 & 0 & 13 & 1000 & 0.00 & 100.00 & 100.00 & 100.00 & 10000 & 0.996 & 0.839 & 0.839 & 0.839 \\
Umin & 1 & 0 & 15 & 1000 & 0.00 & 100.00 & 100.00 & 100.00 & 10000 & 0.996 & 0.839 & 0.839 & 0.839 \\
UshlSat & 5 & 1 & 85 & 1000 & 0.00 & 100.00 & 100.00 & 100.00 & 10000 & 1.000 & 0.000 & 0.000 & 0.000 \\
UsubSat & 2 & 0 & 31 & 1000 & 0.00 & 100.00 & 100.00 & 100.00 & 10000 & 0.995 & 0.753 & 0.753 & 0.753 \\
Xor & 13 & 2 & 232 & 1000 & 50.90 & 56.10 & 66.70 & \textbf{68.20} & 10000 & 0.921 & 0.908 & 0.890 & \textbf{0.887} \\
\bottomrule
\end{tabular}
\end{table}





\subsection{Comparison to LLVM's Abstract Transformers}
\label{sec:comparison-to-llvm}

LLVM provides abstract transformers for a handful of abstract domains,
notably \texttt{KnownBits} and \texttt{ConstantRange}.
These transformers are handwritten and have been fine-tuned over the
years.  Currently, LLVM supports 44 of the \xuanyuchanged{47} binary operators for
\texttt{KnownBits} and \xuanyuchanged{37} of \xuanyuchanged{47} operators for \texttt{ConstantRange}.
To evaluate the effectiveness of our synthesis approach, we use our
tool to automatically generate abstract transformers for operators in
both domains.
%

Our evaluation reports data for \xuanyuchanged{40} of the \xuanyuchanged{47} operators.
The seven omitted operators---MulNsw, UMulSat, MulNuw, SAddSat,
MulNswNuw, SMulSat, and SSubSat---require overflow checks that we have
not yet supported in our dalect.

\subsubsection{Evaluation Setting}

We evaluate the precision of \name synthesized transformers, LLVM built-in transformers, and the meet of both.
We measure precision for 8-bit and 64-bit bitvectors on a set of randomly sampled inputs, denoted as $\adom_{\text{test}}$.
All concrete operators in our benchmarks are binary, so each input consists of a pair of abstract values $(a_1, a_2)$.
Generally speaking, the generation of test inputs for evaluation is the same as the test generation described in \Cref{s:impl:evaluationengine}, but using different random seeds.
%

In the 8-bit setting, the concrete domain is small (i.e., $2^8$), making it feasible to compute for each sampled input the theoretically best result $\besttf(a_1, a_2)$ by exhaustively enumerating all concrete value pairs $(c_1, c_2)$ such that $c_1 \in \gamma(a_1)$ and $c_2 \in \gamma(a_2)$.
We therefore report the percentage of inputs for which the evaluated transformer produces exactly the abstract output $\besttf(a_1, a_2)$.
In the 64-bit setting, exhaustive enumeration is no longer feasible as
each abstract value could concretize to as many as $2^{64}$ concrete
values.
Therefore, we report the norm over the sampled test inputs $\normp{\tf}{\adom_{\text{test}}}$.

Specifically, for \knownbits and \constantrange domain, we both sample 1,000 (and 10,000) pairs of abstract values for the 8-bit (and 64-bit) evaluations. For each abstract pair, we enumerate all concrete value pairs in the 8-bit setting, and sample 10,000 concrete value pairs in the 64-bit setting.

\mypar{Constraints on Concrete Operators}
As mentioned in \Cref{s:impl:input}, some concrete operators impose constraints on their operands.
For example, shift operators require the second operand to lie within the range $[0, \text{bitwidth}]$.
In such cases, a randomly sampled abstract input pair may not contain any concrete value pair that satisfies the constraint.
%
%
For 8-bit inputs, we apply rejection sampling to ensure only non-empty inputs---those that contain at least one valid concrete value pair---are included.
However, for 64-bit inputs, non-empty inputs could be rare, and rejection sampling becomes inefficient. Therefore, we simply skip empty inputs and exclude them from evaluation at 64 bits.
As a result, the \textbf{Tests} count in some rows of the right columns in \Cref{tab:kb-table,tab:cr-table} is less than 10,000.
In extreme cases where all 10,000 sampled inputs are empty (due to how few valid inputs there might be for a certain operator), we omit the corresponding data.

\mypar{Comparing to LLVM's wrapped \constantrange}
While the LLVM \knownbits domain cleanly fits into our framework, the LLVM \constantrange domain presents some friction, as it consists of wrapped intervals\cite{Gange2015Interval}.
Specifically, each element in this LLVM domain is either $\bot$ (the empty set), $\top$ (the set of all $w$-bit integers), or represented by $[a, b)$, where $a,b$ are $w$-bit bitvectors such that $a \neq b$.
This domain is a sign-agnostic domain.
The concretization function for LLVM \constantrange domain is defined as follows, where $<_l$ is lexicographic ordering on bit-vectors:
\[
    \gamma([a, b)) =
    \begin{cases}
        \{a, \ldots, b-1\}                           & \text{if } a <_l b \\
        \{0^w, \ldots, b-1\} \cup \{a, \ldots, 1^w\} & \text{otherwise.}
    \end{cases}
\]

The LLVM \constantrange domain \changed{is a sound but non-Galois} abstract domain---i.e., it does not form a Galois connection with the concrete bit-vector domain.
Its abstraction function $\alpha$ can be understood as returning a most precise wrapped interval that covers the given set of bitvectors---i.e., no subinterval of it does. However, such a most precise interval is not guaranteed to be unique.

\name only supports \changed{Galois-connection-based} abstract domains where a best abstraction is unique. So to enable a comparison to this non-well-defined abstract domain, we synthesized transformers for the following two segment domains:
\begin{itemize}
    \item Unsigned Intervals ($\ucr$): Each element is either $\bot$ or from the set $\{[a, b] \mid 0 \le a \le b < 2^w\}$.
    \item Signed Intervals ($\scr$): Each element is either $\bot$ or from the set $\{[a, b] \mid 2^{w-1} \le a \le b < 2^{w-1}\}$.
\end{itemize}
Because our abstract domains and the not-well-defined abstract domain used by LLVM are technically incomparable, we need to make some compromises when analyzing their precision and mapping elements of one to the other.
For unsigned concrete operators (e.g. \texttt{umax}) and sign-agnostic ones (e.g. \texttt{add}), $\ucr$ transformers are used for comparison. If the LLVM transformer produces a wrapper interval that cannot be represented as an element $\ucr$, we do not use that input in our comparison.
For signed operators (e.g., \texttt{smax}), we instead use $\scr$ transformers and similarly skip cases where LLVM’s wrapped interval cannot be captured in $\scr$.
To summarize, we only compare the two domains on inputs that they can both represent.
These skipped inputs cause the \textbf{Tests} number in both 8-bit and 64-bit in \Cref{tab:cr-table} to be less than the desired number of sampled inputs.




\subsubsection{Evaluation Results}

\Cref{tab:kb-table} and \Cref{tab:cr-table} summarize \knownbits and \constantrange evaluation result respectively.
Experiments ran on a server with 2× Intel Xeon Gold 6230 CPUs (40 cores, 80 threads, 2.10GHz). Each benchmark took around 6 hours to finish.

For the \knownbits domain, LLVM provides manually written transformers for $37/40$ concrete operators.
          Among these, there are \benchanged{$12$} operators for which the LLVM transformers do not achieve the theoretically best result.
          For \benchanged{$3$} out of these \benchanged{$12$} cases, \name synthesizes more precise transformers, as measured by the norm function on 64-bit inputs.
          For another \benchanged{$5$} out of \benchanged{$11$} cases, the synthesized transformers are less precise than the LLVM ones but still uncover new heuristics that are not present in the LLVM implementations and therefore result in increased precision for the meet of the LLVM and the synthesized transformer.
          The remaining $28$ out of $40$ LLVM transformers are already the best, meaning there is no room for improvement via synthesis.
          That being said, \name is able to match the precision for $3$ of these hand-tuned transformers.
          \name also synthesizes transformers with non-trivial precision for the two operations for which LLVM does not include a transformer.

For the \constantrange domain, LLVM provides manually written transformers for $30/40$ concrete operators.
          \benchanged{\name{} is able to synthesize transformers for all $40$ operators.}
          There are \benchanged{$13$} operators where LLVM transformers do not achieve the theoretically best result.
          For \benchanged{$9$} out of these \benchanged{$13$} cases, \name synthesizes more precise transformers, as measured by the norm function on 64-bit inputs.
          For another \benchanged{$6$} out of \benchanged{$13$} cases, the synthesized transformers are less precise than the LLVM ones but still uncover new heuristics that are not present in the LLVM implementations (again shown by the increased precision obtained when taking the meet of the LLVM and synthesized transformer).
          The remaining \benchanged{$17$} out of \benchanged{$30$} LLVM transformers are already the best, meaning there is no room for improvement via synthesis.
          \name matches existing hand-tuned transformers for \benchanged{$11$} of these cases.

\subsection{End-to-End Precision and Performance}
\label{sec:end-to-end}

\changed{In this section we evaluate the effect of replacing LLVM's
  known bits transfer functions with our own.
We perform this comparison to show that our transfer functions are
reasonable ones, but our larger goal is not to beat LLVM at its own
game.
Rather, we aim to develop basic technologies that can be used to avoid
manual implementation of transfer functions in future compilers.}

\changed{Replacing LLVM's transfer functions with our own was not entirely a
clean software engineering job since LLVM's known bits analysis is
implemented in a style that intermixes its transfer functions with a
highly ad hoc dataflow framework that simply recurses along backwards
dataflow edges until a maximum depth is reached.
We left this framework in place, but called out to our own transfer
functions, instead of LLVM's, at appropriate points in the code.}

\changed{We took the nine C/C++, integer programs from the SPEC CPU 2017
benchmark suite and compiled them using an off-the-shelf Clang/LLVM
version 21, and then also our modified version.
This optimizing compilation (using Clang's \texttt{-O3 -march=native}
flags) was done on a Linux machine using an AMD 2990WX 32-core CPU,
with parallelism disabled in the build system.
During an optimizing compile, LLVM uses the known bits analysis
results many times, from many different passes.
To evaluate precision, we took the final, optimized LLVM IR resulting
from optimized compilation and invoked LLVM's known bits
analysis---with and without our synthesized transfer functions---on
every integer-typed SSA value.}


\begin{table}[t]
\centering
\caption{Comparison of compilation times and known bits precision
  between LLVM's own transfer functions and ours, for
  the nine SPEC CPU 2017 integer benchmarks in C/C++}
\label{tab:compile-time}
\footnotesize
\setlength{\tabcolsep}{2pt}
\renewcommand{\arraystretch}{1.05}

\begin{tabular}{r@{}r{r}*{30}{r{r}}@{}}
& & perlbench & gcc & mcf&omnetpp &xalancbmk &x264 &deepsjeng &leela &xz\\
\midrule[\lightrulewidth]
& KLOC & 362 & 1,304 & 3  & 134 & 520 & 96 & 10 & 21 & 33 \\
\midrule[\lightrulewidth]
& LLVM & 68.39 &
370.60 &
1.94 &
117.82& 
303.47 &
42.08 &
4.78 &
13.49 &
11.31 &
\\
\textbf{Compile time (s)} & Ours & 69.95 &
376.80 &
2.05 &
117.99& 
304.61 &
46.41 &
5.07 &
13.66 &
11.74 &
 \\
& Slowdown & 
2.27\,\% &
1.67\,\% &
5.49\,\% &
0.14\,\% &
0.38\,\% &
10.29\,\% &
5.89\,\% &
1.22\,\% &
3.80\,\% &
 \\
\midrule[\lightrulewidth]
& LLVM &
1,356,555&
4,272,154&
910&
62,251&
475,736&
247,344&
15,578&
25,207&
76,907&
\\
\textbf{Known bits} & Ours &
1,305,537&
4,195,918&
910&
62,102&
442,838&
218,171&
14,780&
19,353&
72,090&
\\
& Precision loss &
3.76\,\% &
1.78\,\% &
0.00\,\% &
0.24\,\% &
6.92\,\% &
11.79\,\% &
5.12\,\% &
23.22\,\% &
6.26\,\% &
\end{tabular}

\vspace{-1pt}
\end{table}

\changed{Table~\ref{tab:compile-time} shows the results of this
  experiment.
Our transfer functions are neither as fast nor as precise as LLVM's,
but the difference is not huge.
We have not yet attempted to optimize our transfer functions for
runtime performance.
One aspect that could be improved is our generated C++, which is not
idiomatic: it creates and destroys more temporary objects than does
LLVM's hand-written code, and furthermore it performs some translation
between MLIR and LLVM data structures that could be elided.
Another potential avenue for improvement would be to make execution
time part of our fitness function during synthesis.}

\benchanged{In addition, we ran a smaller ad-hoc experiment
compiling openssl, ffmpeg, and cvc5 using the meet of our synthesized
\knownbits transformers and LLVM's.
The precision improvements are modest:
+2 discovered bits in openssl (above the baseline of 1.3M bits discovered by LLVM),
+14 bits in ffmpeg (baseline: 3.7M), and
+0 bits in cvc5 (baseline: 16M).}

\subsection{Opportunities Created by \name{}}
\label{sec:opportunities}

\changed{We believe that synthesis changes the basic economics of transfer
functions, in the sense that we can synthesize more of these than we
would want to write by hand.
%
}

\subsubsection{Precision via Specialization}

\changed{Production compilers usually provide a wide variety of intrinsic
functions that encapsulate higher-level operations such as popcount
(Hamming weight) and saturating arithmetic operations.
The advantage of intrinsics over open-coded versions of these
operations is that the composite versions can readily be lowered to
either dedicated machine instructions or optimized library code.
However, composite operations have a secondary benefit, which is that
a composite transfer function is typically more precise than what we
would get by composing the results of the transfer functions for the
elementary operations that make up the open-coded version of the
operation.
To create Table~\ref{tab:intrinsics}, which illustrates this point, we
synthesized transfer functions for eight LLVM intrinsics, and then we
also measured the precision attained by analyzing an open-coded
version of each operation---that is, by composing the most precise
transformers that we have been able to synthesize.
In all cases, the transfer function for the composed operation is
considerably more precise.}

\begin{table}[tbhp]
\centering
\caption{Results for \textit{top}, \textit{composed}, and
  \textit{synth} on unary (6{,}561 test cases) and binary
  (43{,}046{,}721 cases) functions. Exact is reported in \%
  (higher is better), and precision is reported as a normalized count
  (lower is better).}
\label{tab:intrinsics}
\footnotesize
\setlength{\tabcolsep}{5pt}
\begin{tabular}{@{}l
                l
                rrr
                rrr@{}}
\toprule
\multirow{2}{*}{\textbf{Category}} & \multirow{2}{*}{\textbf{Concrete Op}} & \multicolumn{3}{c}{\textbf{Exact (\%)} $\uparrow$} & \multicolumn{3}{c}{\textbf{Precision (norm)} $\downarrow$} \\
\cmidrule(lr){3-5} \cmidrule(lr){6-8}
 & & $\top$ & composed & synth & $\top$ & composed & synth \\
\midrule 
 & Abs         & 1.95 & 3.92 & \textbf{100.00} & 3372 & 2552 & \textbf{0} \\
\emph{Unary Functions} & CountRZero  & 0.00 & 33.33 & \textbf{83.63}  & 5740 & 3553 & \textbf{193} \\
(6{,}561 test cases) & CountLZero  &0.00 & 0.00 & \textbf{83.63}  & 5740 & 5466 & \textbf{193} \\
 & PopCount    & 0.00 & 0.05 & \textbf{69.53}  & 4461 & 4456 & \textbf{369} \\
\midrule
 & Smax        & 4.46 & 6.33 & \textbf{56.86} & 19{,}797{,}600 & 18{,}179{,}000 & \textbf{5{,}471{,}520} \\
\emph{Binary Functions}  & Smin        & 4.46 & 6.04 & \textbf{70.39} & 19{,}797{,}600 & 18{,}384{,}100 & \textbf{4{,}117{,}500} \\
(43{,}046{,}721 test cases) & UaddSat     & 13.93 & 22.93 & \textbf{56.20} & 17{,}358{,}900 & 14{,}111{,}200 & \textbf{4{,}471{,}530} \\
 & UsubSat     & 13.93 & 19.46 & \textbf{49.11} & 17{,}358{,}900 & 15{,}377{,}400 & \textbf{5{,}369{,}170} \\
\bottomrule
\end{tabular}
\end{table}

\subsubsection{Reduced Product}

\begin{table}[t]
\centering
\caption{Results for reduced product between \knownbits and \constantrange for 8-bit and 64-bit integers. The columns and cells have the same meaning as in~\Cref{tab:kb-table}. Only operations for which the reduced product has an improvement over our synthesized \knownbits transformer are included.}
\label{tab:rp-table}
\footnotesize
\setlength{\tabcolsep}{5pt}
\begin{tabular}{@{}|l
                r
                rrr
                |r
                rrr@{}}
\toprule
\multirow{2}{*}{\textbf{Concrete Op}} & \multirow{2}{*}{\textbf{Tests}} & \multicolumn{3}{c}{\textbf{8-bit exact (\%)} $\uparrow$} & \multirow{2}{*}{\textbf{Tests}} & \multicolumn{3}{c}{\textbf{64-bit precision (norm)} $\downarrow$} \\
\cmidrule(lr){3-5} \cmidrule(lr){7-9}
 & & $\top$ & synth & reduced & & $\top$ & synth & reduced \\
\midrule
Abds         & 1000 & 7.64  & 18.06 & \textbf{28.98}  & 10000 &  0.1260  & 0.1119 & \textbf{0.1069} \\
Abdu         & 1000 & 7.11  & 20.76 & \textbf{71.06}  & 10000 &  0.1236  & 0.1085 & \textbf{0.0921} \\
AddNsw       & 1000 & 6.68  & 18.73 & \textbf{81.65}  & 10000 &  0.2820  & 0.1125 & \textbf{0.0869} \\
AddNswNuw    & 1000 & 0.22  & 44.89 & \textbf{88.20}  & 10000 &  0.5592  & 0.1216 & \textbf{0.0610} \\
AddNuw       & 1000 & 3.35  & 31.80 & \textbf{92.32}  & 10000 &  0.4920  & 0.0930 & \textbf{0.0557} \\
AvgCeilS     & 1000 & 9.83  & 18.01 & \textbf{29.16}  & 10000 &  0.1651  & 0.1573 & \textbf{0.1503} \\
AvgFloorS    & 1000 & 9.86  & 17.37 & \textbf{46.61}  & 10000 &  0.1669  & 0.1598 & \textbf{0.1412} \\
AvgFloorU    & 1000 & 9.90  & 19.00 & \textbf{50.37}  & 10000 &  0.1668  & 0.1481 & \textbf{0.1336} \\
Sdiv         & 1000 & 17.99 & 27.85 & \textbf{45.61}  & 10000 &  0.7262  & 0.2493 & \textbf{0.2229} \\
Smax         & 1000 & 0.44  & 59.89 & \textbf{83.19}  & 10000 &  0.4959  & 0.0926 & \textbf{0.0822} \\
Smin         & 1000 & 0.43  & 59.62 & \textbf{84.23}  & 10000 &  0.4954  & 0.0953 & \textbf{0.0843} \\
Srem         & 1000 & 13.14 & 22.41 & \textbf{26.92}  & 10000 &  0.1845  & 0.1701 & \textbf{0.1689} \\
SshlSat      & 1000 & 4.08  & 33.43 & \textbf{43.35}  & 10000 &  0.9542  & 0.3211 & \textbf{0.3148} \\
SubNswNuw    & 1000 & 0.33  & 39.01 & \textbf{77.20}  & 10000 &  0.5617  & 0.1299 & \textbf{0.0701} \\
SubNuw       & 1000 & 3.52  & 36.84 & \textbf{90.94}  & 10000 &  0.4822  & 0.0763 & \textbf{0.0466} \\
UaddSat      & 1000 & 4.11  & 61.03 & \textbf{83.09}  & 10000 &  0.4482  & 0.0874 & \textbf{0.0469} \\
Udiv         & 1000 & 0.00  & 68.66 & \textbf{75.28}  & 10000 &  0.9845  & 0.0134 & \textbf{0.0067} \\
UdivExact    & 1000 & 0.02  & 3.21  & \textbf{5.78}   & 10000 &  1.0000  & 0.0272 & \textbf{0.0195} \\
Umax         & 1000 & 0.54  & 95.28 & \textbf{99.74}  & 10000 &  0.4947  & 0.0016 & \textbf{0.0001} \\
Umin         & 1000 & 0.56  & 92.99 & \textbf{99.59}  & 10000 &  0.4964  & 0.0023 & \textbf{0.0003} \\
Urem         & 1000 & 2.12  & 61.45 & \textbf{66.53}  & 10000 &  0.2677  & 0.0393 & \textbf{0.0367} \\
UsubSat      & 1000 & 4.06  & 56.03 & \textbf{73.09}  & 10000 &  0.4508  & 0.1106 & \textbf{0.0700} \\
\bottomrule
\end{tabular}
\end{table}

A reduced product domain improves analysis precision by combining the strengths of multiple abstract domains. Each domain captures different aspects of program behavior---for example, \knownbits tracks known zero and one bits, while \constantrange tracks possible value ranges.
The reduced product coordinates these domains using a reduction operator ($\sigma$), which refines each domain's element based on information from the other~\changed{\cite{cc-popl-1979}}. This mutual refinement helps eliminate infeasible states and improve the overall precision of the analysis.
Since \name synthesizes transformers for all operations in both the \knownbits and \constantrange domains, we can automatically construct reduced product transformers for any concrete operation by manually providing a suitable reduction operator that relates the two domains.

We evaluate the difference in precision between using \knownbits and using the reduced product between \knownbits and \constantrange.
To have a uniform random sampling from this reduced product lattice, we simply sample uniformly from both \knownbits and \constantrange, where the reduction between the \knownbits and \constantrange abstract values is not bottom.

We compare only against our synthesized transformers, not LLVM’s, because our sampling occurs over the product lattice. This means the abstract elements we evaluate are inherently more precise than those representable by \knownbits alone, making a direct comparison with LLVM’s \knownbits transformers unfair.
Instead, we focus on identifying which concrete operations benefit from the added precision of the reduced product domain, and by how much, when using our synthesized transformers.

\Cref{tab:rp-table} presents our results.
The experiment finished quickly (90 seconds, on a reasonably fast
laptop) because it merely evaluates transformers rather than
synthesizing transformers.
We report only the transformers for which the reduced product yields a precision improvement over using \knownbits alone.
On the remaining operations, such as \texttt{Xor} (not shown in \Cref{tab:rp-table}), the reduced product offers no additional benefit, as \knownbits is already maximally precise. In contrast, operations such as \texttt{AddNswNuw} show substantial gains in precision when evaluated within the reduced product domain.
There were $22$ such concrete operations for which the reduced product was able to improve over the \knownbits domain, out of \benchanged{$40$} total operations.
For the concrete operations where there was an improvement, the reduced product improved by an average of $24.4\%$ points for the tests at $8$-bits wide, and improved the average size score by $196.2$, on the tests at $64$-bits wide.
By combining \knownbits with \constantrange in a reduced product, we are able to improve precision particularly for operators that are challenging for both our synthesized \knownbits transformers and LLVM’s. Overall, the reduced product delivers consistently strong performance across the full range of supported operations, thus showcasing one additional benefit enabled by \name.

\subsection{Impact of DSL Choice}
\label{se:dsl}

\changed{We evaluate how the choice of operations in DSL affects the performance of \name.
    The \textbf{Full} DSL used in \Cref{sec:comparison-to-llvm} includes 29 primitives, which can be grouped as follows:}
\changed{
    (1) \emph{Bitwise:} \texttt{and}, \texttt{or}, \texttt{xor}, \texttt{neg};
    (2) \emph{Add:} \texttt{add}, \texttt{sub}.
    (3) \emph{Max:} \texttt{umax}, \texttt{umin}, \texttt{smax}, \texttt{smin};
    (4) \emph{Mul:} \texttt{mul}, \texttt{udiv}, \texttt{sdiv}, \texttt{urem}, \texttt{srem};
    (5) \emph{Shift:} \texttt{shl}, \texttt{ashr}, \texttt{lshr};
    (6) \emph{BitSet:} \texttt{set\_high\_bits}, \texttt{set\_low\_bits}, \texttt{clear\_low\_bits}, \texttt{clear\_high\_bits}, \texttt{set\_sign\_bit}, \texttt{clear\_sign\_bit};
    (7) \emph{BitCount:} \texttt{count\_left\_one}, \texttt{count\_left\_zero}, \texttt{count\_right\_one}, \texttt{count\_right\_zero}.
    (8) \emph{ITE:} \texttt{if\_then\_else}.
}
\changed{
We conduct an ablation study on the \knownbits domain over two DSL subsets:
\basicops=\emph{Bitwise} $\cup$ \emph{Add}, contains only addition, subtraction, and bitwise operations.
\bitops=\emph{Bitwise} $\cup$ \emph{Add} $\cup$ \emph{Max} $\cup$ \emph{Shift} $\cup$ \emph{BitSet} $\cup$ \emph{BitCount} $\cup$ \emph{ITE}, contains all primitives except those in \emph{Mul}.
}

\changed{
    %
    The \textbf{Full}  language, \basicops, and \bitops yield the most precise transformers for 16, 5, and 14 operations, respectively.
    For the remaining 5/40 benchmarks, they all reach the same precision.
    \whenappendix{Detailed results are summarized in \Cref{tab:dsl-table} in \Cref{app:ablation:dsl}.}
}

\changed{The results are in a way expected: if certain operations are known to be irrelevant for a specific transformer, removing them from the DSL can improve the precision---e.g., \basicops is the smallest of the three languages that can produce an optimal transformer for \texttt{add} and \texttt{sub} and does well for such benchmarks.}
    %



\subsection{Impact of Abduction}
\label{se:evaluation:abduction}

\changed{We now evaluate the impact of condition abduction (\Cref{sec:abduction}) by running \name with and without condition abduction.
    For \knownbits, condition abduction improves precision by $6.44\%$ on average (geometric mean), with 19/40 benchmarks showing gains.
    For \constantrange, condition abduction improves precision by $2.3\%$ on average (geometric mean), with 16/37 benchmarks showing gains.
    \whenappendix{Detailed Results are shown in Tables~\ref{tab:kb-abduction} and~\ref{tab:cr-abduction} in \Cref{app:ablation:abduction}.}
}

\changed{A representative case highlighting the necessity of abduction is the \texttt{add} benchmark in the $\ucr$ domain.
Its best transformers produce intervals better than $\top$ only when one knows overflow cannot happen.
Using condition abduction, \name synthesizes the best transformer in 2 rounds: first, it discovers a transformer that only works for non-overflow cases, and then it identifies the overflow condition to generate a complete transformer.
When abduction is disabled, \name cannot find the best transformer within 5 rounds.}

\changed{To summarize, overall condition abduction generally improves precision, but because it uses part of the compute budget, it may in certain cases reduce precision.
Of course, one can also run \name with and without abduction and output the better result.}

\subsection{\changed{Comparison to \amurth}}
\label{se:evaluation:amurth}

\changed{
The only other work we are aware of that tackles the general problem of synthesizing abstract transformers is \amurth~\cite{DBLP:journals/pacmpl/KalitaMDRR22}.
\amurth takes the same inputs as \name---i.e., a DSL, a concrete semantics, and an abstract domain---and synthesizes a provably most precise (up to a given input bound) transformer expressible within the given DSL.
To do so, \amurth uses constraint solvers (specifically Sketch~\cite{DBLP:journals/sttt/Solar-Lezama13}) to iteratively synthesize transformers that increase precision on a finite set of abstract inputs until a provably optimal transformer is sound.}

\changed{
The evaluation by \citeauthor{DBLP:journals/pacmpl/KalitaMDRR22} includes transformers for both string operations and integer operations. Because \name currently does not support string operations, we only focus on the latter.
\amurth has successfully been used to synthesize transformers for 9 concrete operators for the unsigned and signed interval domains \citet[Table 4]{DBLP:journals/pacmpl/KalitaMDRR22}: \texttt{add}, \texttt{sub}, \texttt{mul}, \texttt{and}, \texttt{or}, \texttt{xor}, \texttt{shl}, \texttt{ashr}, and \texttt{lshr}.
We therefore focus our evaluation on these 2 domains and  9 operations.
}

\changed{
When provided with this DSL consisting of the set of 29 primitive instructions we used in \Cref{se:dsl}, \amurth could not synthesize any transformer or returned $\top$ within the time limit.
}

\changed{
While \amurth cannot synthesize abstract transformers when given a generic DSL, it can do so by providing ``hints'' to the synthesizer in the form of sketches---i.e., partial programs where only some parts are missing---and custom auxiliary functions.
For example, \amurth synthesizes transformers for bitwise operators (\texttt{and}, \texttt{or}, \texttt{xor}), when auxiliary functions such as \texttt{minOr}, \texttt{maxOr}, \texttt{minAnd}, \texttt{maxAnd} (which compute the lower/upper bound of the results of bitwise-or/and over 2 intervals) are provided.
With those auxiliary functions provided, they further provide program template (that describes a clever way to divide input intervals at 0) to synthesize for signed domains \citet[Figure 16]{DBLP:journals/pacmpl/KalitaMDRR22}. 
Providing hints and templates allows \amurth to synthesize most-precise transformers for very tricky transformers operations, but requires the users of \amurth to provide insights that are quite close to the actual solution.
Moreover, one has to provide \amurth with different hints and sketches (i.e., different DSLs) for different concrete operations, even when the underling abstract domain does not change.
}

\changed{To summarize, \amurth cannot solve the problem tackled in this paper---i.e., synthesize abstract transformers for many concrete operators using just \textit{one} given DSL.
However, \amurth is well-suited for synthesizing optimal transformers for tricky individual operations, as long as the user is willing to provide hints to the synthesizer in the form of program sketches and auxiliary functions.
}


\section{Other Related Work}
\label{se:related-work}

Our approach draws on a rich line of work on synthesizing abstract transformers, verification infrastructure, and stochastic program synthesis. We build on the MLIR ecosystem~\cite{mlir-pldi25}, expressing synthesized abstract transformers in a first-class dialect that supports both efficient compilation via LLVM and formal reasoning via SMT encoding, using Z3~\cite{z3} for soundness verification.
\changed{We have already discussed at length how \name relates to its closest related work, \amurth~\cite{DBLP:journals/pacmpl/KalitaMDRR22,10.1007/978-3-031-74776-2_6}, in \Cref{se:evaluation:amurth}, and use the rest of the section to discuss other approaches.}

\subsubsection*{Precision-Oriented Synthesis and Domain-Specific Approaches}

Several efforts have addressed the problem of deriving best or approximate abstract transformers.
\changed{A line of work by Reps, Sagiv, Yorsh, and Thakur~\cite{VMCAI:RSY04,DBLP:conf/cav/ThakurR12,DBLP:conf/sas/ThakurER12,ThakurLLR15,VMCAI:Reps16} develops methods for automated symbolic abstraction—computing the best abstract transformer for a given input.
In these approaches, the transformer is not necessarily an explicit, executable program.
Other works have focused on finding executable representations of abstract transformers, but are typically tied to specific abstract domains or specific representations. 
For example, \citet{rr-asplos-2004} encode transformers using BDDs, which can sometimes be inefficient due to the limited expressivity of BDDs.
\citet{DBLP:journals/toplas/ElderLSAR14} target conjunctions of bit-vector equalities, 
\citet{llh-oopsla-2025} target numerical abstract domains such as linear convex polytope,
and \citet{harishankar2025ebpf} target interval analysis in the eBPF verifier.
In contrast, \name\ supports a wide range of instructions for composing abstract transformers and remains agnostic to specific domains.}
%
%
More recent work proposes sketching-based algorithms for learning disjunctive and conjunctive specifications over program behaviors~\cite{DBLP:journals/pacmpl/ParkDR23,loudfull}. While our work shares the idea of synthesizing multiple components and combining them (via meet), we instantiate it in the domain of abstract interpretation with formal guarantees and no sketching.

\subsubsection*{Stochastic and MCMC-Based Synthesis}

Our synthesis algorithm is inspired by stochastic search techniques, in particular the Markov Chain Monte Carlo (MCMC) superoptimization strategy introduced by Stoke~\cite{Alex2013Stoke}. Like Stoke, our framework searches the space of candidate programs guided by a cost function, using probabilistic rewrites. Unlike Stoke, which targets concrete program optimization, we use MCMC to synthesize abstract transformers, and our cost function encodes the precision improvement from existing transformers. Moreover, our work introduces a novel abductive refinement strategy that iteratively improves precision by synthesizing and composing multiple sound transformers. Instantiating this algorithm over the full LLVM instruction set via MLIR requires significant engineering and forms a core contribution of our work.

\section{Conclusion}
\label{s:conclusion}

Abstract transformers are a load-bearing component of a modern optimizing compiler: the compiler will miss optimizations if transformers are imprecise, and it will miscompile if they are unsound.
We created \name: a framework for synthesizing formally verified abstract transformers from specifications of integer IR instructions and \changed{finite} non-relational abstract domains.
Unlike previous systems, ours does not require any sketches---transformers are synthesized from scratch---and can therefore quickly synthesize transformers for dozens of operators. 
The insight that made this possible is that transformers can be synthesized piecewise, with each new piece targeting a different part of the input space.
The final transformer is simply the meet of its constituent pieces.
In our evaluation,
\name synthesized transformers for most LLVM operations with precision sometimes comparable to \changed{LLVM's manually written transformers}.
\name also synthesized \benchanged{26} transformers that are either more precise than LLVM's or can be combined with LLVM ones via a meet operation to yield new transformers with greatly increased precision---i.e., \name's transformers deal with corner cases that had eluded LLVM developers.

\section*{Data-Availability Statement}
\name{} and scripts for reproducing our experiments
are available on Zenodo~\cite{xsynth-zenodo}.

\begin{acks}
  Kennedy was supported by a seed grant from the University of Utah.
D'Antoni and Peng are supported in part by a Microsoft Faculty Fellowship; a UCSD JSOE Scholarship; and NSF under grants CCF-2422214, CCF-2506134 and CCF-2446711. 
Fan and Regehr were supported in part by the National Science Foundation under grant under Grant No.\ 1955688.
Any opinions, findings, and conclusions or recommendations expressed in this publication are those of the authors, and do not necessarily reflect the views of the sponsoring entities.
Loris D'Antoni holds concurrent appointments as a Professor at the University of California San Diego and as an Amazon Scholar. This paper describes work performed at the University of California San Diego and is not associated with Amazon.
\end{acks}

\bibliographystyle{ACM-Reference-Format}
\bibliography{main.bib}

\whenappendix{
\appendix
\section{Detailed Comparison to \amurth}
\label{app:amurth}

\changed{\citeauthor{DBLP:journals/pacmpl/KalitaMDRR22} used \amurth to synthesize transformers for both string domains and fixed-bitwidth integer domains.
    We focus our comparison on integer domains, as \name currently supports only integer domains.
    \amurth synthesized transformers for 9 concrete operators for unsigned and signed interval domains: \texttt{add}, \texttt{sub}, \texttt{mul}, \texttt{and}, \texttt{or}, \texttt{xor}, \texttt{shl}, \texttt{ashr}, and \texttt{lshr}~\cite[Table 4]{DBLP:journals/pacmpl/KalitaMDRR22}.
    These operators are also featured in our benchmark set.
    \amurth successfully synthesized the best transformers for each of these operators within 30 minutes, whereas \name synthesized the best transformers only for \texttt{add}, \texttt{sub}, and \texttt{ashr}.}

\changed{However, the reason \amurth performs well is that it requires users to supply program templates and auxiliary functions, which serve as powerful hints to guide synthesis.
    Moreover, these hints differ across synthesis tasks.
    When \amurth is restricted to the same setting as \name—i.e., using only the base DSL operators and no templates—it times out on all benchmarks for both unsigned and signed interval domains.
    In the following case studies, we examine several examples of such hints from the \amurth benchmark suite and illustrate why \amurth  is not suited to automate the synthesis of many transformers in a compiler.}

\paragraph{\amurth Requires Sketches or Templates}
\label{s:evaluation:amurth:template}
\changed{
    Users of \amurth typically need to provide a program sketch or template of the desired solution, even though the theoretical framework supporting \amurth is parametric in the choice of DSL. The sketches are often hand-derived from existing manual implementations~(noted in~\citet[Table 2 and Table 3]{DBLP:journals/pacmpl/KalitaMDRR22}).
}

\changed{
    For instance, the template used to synthesize the transformer for \texttt{xor} \citet[Figure 16]{DBLP:journals/pacmpl/KalitaMDRR22} in the signed domain begins with a helper function \texttt{splitAtZero}, which divides each input interval at 0 if it crosses 0.
    The core synthesis task then fills in the logic that transforms the resulting (up to 2*2=4) pairs of same-signed intervals, before the template finally joins those outputs into the final interval.
    While the template is effective, both the \texttt{splitAtZero} helper and the looping pattern over split-interval pairs are non-trivial to synthesize from scratch.
}

\changed{
    In contrast, our candidate program fixes only a minimal structure: it deconstructs each input interval into two integers at the beginning and reconstructs the output interval from two integers at the end.
    The middle portion is a free-form SSA program, unconstrained in its instruction dependencies.
}

\changed{
    Additionally, \amurth does not synthesize boundary conditions to handle overflow~\cite[Section 6.2.3]{DBLP:journals/pacmpl/KalitaMDRR22}.
    It synthesizes the non-overflow case, and such an assumption can be seen as an implicit template.
    For example, consider the best transformer for addition in the unsigned interval domain, shown in \Cref{fig:ucr-add}.
}
\changed{
    Our approach synthesized an equivalent transformer. In contrast, \amurth  synthesized only the else branch $[a.l+b.l, \; a.r+b.r]$, as it assumes its output will be plugged into a sketch that checks for overflow. Such an assumption is equivalent to $(a.l+b.l > \texttt{MAX\_INT})\; \textbf{or}\; (a.r+b.r > \texttt{MAX\_INT})$. This condition is weaker than the one in our transformer because it means an overflow for additions of either the left \textbf{or} right endpoints (which should be an \textbf{xor}).
    As a result, their no-overflow assumption not only reduces the difficulty of the synthesis, but also loses precision on the overflow cases.
}
\begin{figure}[h]
\vspace{-10pt}
\begin{subfigure}[b]{0.57\columnwidth}
    \begin{equation*}
        \begin{array}{l}
            \texttt{add}^\sharp(a,b) :=                                       \\
            \quad \textbf{if } (a.l+b.l > \texttt{MAX\_INT})\; \textbf{xor}\;  (a.r+b.r > \texttt{MAX\_INT})                              \\
            \quad \textbf{then } [0, \texttt{MAX\_INT}]                       \\
            \quad \textbf{else } [a.l+b.l, \; a.r+b.r]
        \end{array}
    \end{equation*}
    \captionsetup{justification=raggedright,singlelinecheck=false}
    \caption{The best transformer for the \texttt{add} operator\\ in $\ucr$ domain}
    \vspace{-10pt}
    \label{fig:ucr-add}
\end{subfigure}%
\begin{subfigure}[b]{0.425\columnwidth}
\begin{lstlisting}[language=llvm, numbers=none]
KnownBits add#(KnownBits L, R) { 
  APInt l0 = L.Zero, l1 = L.One;
  APInt r0 = r.Zero, r1 = R.One;
  APInt E = (l0 | l1) & (r0 | r1) &
    (~((~l0 + ~r0) ^ l0 ^ r0) | 
    ((l1 + r1) ^ l1 ^ r1));
  APInt known0 = ~(~l0 + ~r0) & E;
  APInt known1 = (l1 + r1) & E;
  return {known0, known1};
}
\end{lstlisting}
    \vspace{-10pt}
\caption{The best transformer for the \texttt{add} operator in the \knownbits domain}
    \vspace{-10pt}
\label{fig:best-add}
\end{subfigure}
\caption{Optimal transformers}
\end{figure}

\paragraph{\amurth relies on auxiliary functions}
\changed{
    This limitation is reflected in~\cite[Section 6.2.4]{DBLP:journals/pacmpl/KalitaMDRR22} and we confirmed it by examining the Amurth codebase.
    For example, to synthesize transformers for bitwise operators (\texttt{and}, \texttt{or}, \texttt{xor}), auxiliary functions such as \texttt{minOr}, \texttt{maxOr}, \texttt{minAnd}, \texttt{maxAnd} (which compute the lower/upper bound of the results of bitwise-or/and over 2 intervals) are provided.
}

\changed{
    These auxiliary functions are non-trivial, consisting of about 25 operators and involving branching and loops, and make the synthesis tasks much easier. Even with the hints above, \amurth still needs advanced program sketches for several harder benchmarks.
}

\paragraph{Hints vary across synthesis tasks}
\changed{
    When provided with enough structure and templates, \amurth can directly synthesize optimal transformers.
    However, the supporting DSL is typically crafted and modified individually for synthesis tasks of each concrete operator. For example, even within the same abstract domain, bitwise operators, arithmetic operators, and shifting operators \emph{each rely on a distinct DSL}. These DSLs consist of nearly disjoint sets of operations.
}

\changed{
    In contrast, our approach employs a single, unified DSL shared across all concrete operators.
    It consists of 29 basic numeric operations that can express a wide variety of transformers.
    Details of the DSL’s design and its operations are discussed in~\Cref{se:dsl}.
}

\paragraph{Summary.}
\changed{
    To summarize, one cannot use \amurth to solve the problem solved by \name, i.e., automatically synthesizing transformers for many operators at once without manually tuning the underlying DSL or providing strong hints in the form of sketches.
    While \amurth does appear to be better than \name for the task of finding tricky transformers for individual operators, it needs guidance from human experts.
    In practice, for an abstract domain, there could be hundreds of operators that need an abstract transformer, and providing tailored hints for each of them is infeasible.
    Hence, we believe that \amurth cannot be used to automate large-scale synthesis of transformers in a compiler.
}

\section{Ablation Study}
\label{app:ablation}

\subsection{Impact of DSL Choice}
\label{app:ablation:dsl}
\changed{
We evaluate how the choice of operations in DSL affects the performance of \name.
As mentioned in \Cref{se:dsl}, we run synthesis over the \textbf{Full} language and 2 subsets (\basicops and \bitops).
Detailed results are summarized in \Cref{tab:dsl-table}.
}

\changed{\basicops is the DSL with the most limited expressivity and causes substantial precision loss in many benchmarks, though it produces optimal transformers for \texttt{add} and \texttt{sub}.
\Cref{fig:best-add} shows a reference implementation of the \texttt{add} transformer in \knownbits, which uses only operations from \basicops but remains fairly complex.
Nevertheless, \name successfully synthesizes a transformer equivalent to this implementation when the DSL is limited to \basicops.
}

\changed{\bitops is the DSL variant that excludes multiplicative operations.
    Since the transformers for shifting operators (e.g.,~\texttt{shl}) rarely rely on multiplication and division,
    synthesis with \bitops achieves slightly higher precision for these operators compared to using the full DSL.}

\changed{In conclusion, if certain operations are known to be irrelevant for a specific transformer, removing them from the DSL can improve the precision.}

\newcolumntype{C}[1]{>{\centering\arraybackslash}p{#1}}
\newcommand{\cw}{6.4mm}    
\newcommand{\fw}{11mm}   
\newcommand{\bench}[1]{\rotatebox{55}{\scriptsize\texttt{#1}}}

\begin{table}[t]
\centering
\caption{
\changed{\knownbits results under DSLs with different operation subsets.
Each value denotes the percentage of tests where the synthesized transformer matches the best transformer $\besttf$ at 8-bit precision.
\textbf{Bolded} numbers denote the configuration that yields the most precise result}
}
\label{tab:dsl-table}
\footnotesize
\setlength{\tabcolsep}{0pt}
\renewcommand{\arraystretch}{1.05}

\begin{tabular}{@{}C{\fw}*{20}{C{\cw}}@{}}
\toprule
 &
\bench{Abds} & \bench{Abdu} & \bench{Add} & \bench{AddNsw} & \bench{AddNswNuw} & \bench{AddNuw} & \bench{And} & \bench{Ashr} & \bench{AshrExact} & \bench{AvgCeilS} & \bench{AvgCeilU} & \bench{AvgFloorS} & \bench{AvgFloorU} & \bench{Lshr} & \bench{LshrExact} & \bench{Mods} & \bench{Modu} & \bench{Mul} & \bench{Or} & \bench{Sdiv} \\
\midrule
\textbf{Full} & 60.1 & \textbf{59.4} & 58.7 & 42.0 & 45.5 & \textbf{53.9} & \textbf{100.0} & \textbf{65.5} & \textbf{40.1} & 38.8 & 38.6 & \textbf{39.3} & 37.7 & \textbf{59.3} & 31.4 & \textbf{64.7} & \textbf{59.0} & 60.6 & \textbf{100.0} & \textbf{64.1} \\
\basicops & 54.7 & 53.3 & \textbf{100.0} & \textbf{78.0} & 12.9 & 37.9 & 100.0 & 31.2 & 30.0 & 31.8 & 31.6 & 32.4 & 32.2 & 12.6 & 28.1 & 48.6 & 39.3 & \textbf{74.4} & 100.0 & 64.1 \\
\bitops & \textbf{60.3} & 59.3 & 51.8 & 51.3 & \textbf{49.0} & 49.5 & 100.0 & 37.8 & 36.6 & \textbf{39.4} & \textbf{40.9} & 39.0 & \textbf{40.5} & 31.0 & \textbf{31.6} & 60.4 & 58.4 & 62.1 & 100.0 & 64.1 \\
\bottomrule
\end{tabular}

\vspace{-1pt}

\begin{tabular}{@{}C{\fw}*{20}{C{\cw}}@{}}
\toprule
 &
\bench{SdivExact} & \bench{Shl} & \bench{ShlNsw} & \bench{ShlNswNuw} & \bench{ShlNuw} & \bench{Smax} & \bench{Smin} & \bench{SshlSat} & \bench{Sub} & \bench{SubNsw} & \bench{SubNswNuw} & \bench{SubNuw} & \bench{UaddSat} & \bench{Udiv} & \bench{UdivExact} & \bench{Umax} & \bench{Umin} & \bench{UshlSat} & \bench{UsubSat} & \bench{Xor} \\
\midrule
\textbf{Full} & 19.3 & \textbf{56.9} & 26.2 & 9.8 & 40.5 & 63.8 & \textbf{72.8} & \textbf{72.4} & 60.6 & 47.8 & \textbf{31.8} & \textbf{47.1} & 61.8 & \textbf{80.8} & 15.3 & 90.6 & \textbf{92.9} & \textbf{96.6} & 52.1 & \textbf{100.0} \\
\basicops & 19.3 & 13.2 & 6.9 & 5.6 & 10.7 & 57.8 & 51.9 & 37.6 & \textbf{100.0} & \textbf{72.3} & 12.0 & 42.4 & 42.5 & 24.1 & 2.9 & 71.4 & 72.2 & 3.7 & 48.2 & 100.0 \\
\bitops & 19.3 & 56.6 & \textbf{27.9} & \textbf{34.9} & \textbf{46.9} & \textbf{71.0} & 56.1 & 45.0 & 61.5 & 51.4 & 28.5 & 45.0 & \textbf{67.4} & 39.9 & \textbf{15.4} & \textbf{93.3} & 88.3 & 96.6 & \textbf{56.9} & 100.0 \\
\bottomrule
\end{tabular}
\end{table}

\subsection{Impact of Condition Abduction}
\label{app:ablation:abduction}

\changed{In this section, we evaluate the impact of condition abduction (\Cref{sec:abduction}).}
\changed{We run synthesis with and without condition abduction on both the \knownbits and \constantrange domains.
    For \knownbits, condition abduction improves precision by $6.44\%$ on average (geometric mean) across all benchmarks, with 18/39 benchmarks showing gains.
    For \constantrange, condition abduction improves precision by $2.3\%$ on average (geometric mean) across all benchmarks, with 16/39 benchmarks showing gains.
    Results are shown in Tables~\ref{tab:kb-abduction} and~\ref{tab:cr-abduction}.
}

\changed{A representative case highlighting the necessity of abduction is the \texttt{add} benchmark in the $\ucr$ domain.
    Its best transformers, shown in \Cref{fig:ucr-add}, produce intervals better than $\top$ only when whether overflow happens is known.
    When inspecting the synthesis process, we observe that \name synthesized the best transformer in 2 rounds:
    in the first round, it discovered a transformer that only works for non-overflow cases and stored it as one of the unsound but highly precise candidates;
    in the second round, it successfully identified the overflow condition the complete the full transformers.
    However, when abduction is disabled, \name failed to synthesize the best transformer within 5 rounds.}

\changed{To summarize, condition abduction generally improves precision, but may sometimes reduce it since condition abduction reuses part of the parallel search budget.
    However, one can also run \name with and without abduction and selecting the better result.}


\begin{table}[h]
\centering
\caption{
\changed{\knownbits results with and without abduction.
Each value denotes the percentage of tests where the synthesized transformer matches the best transformer $\besttf$ at 8-bit precision.}
}
\label{tab:kb-abduction}
\footnotesize
\setlength{\tabcolsep}{0pt}
\renewcommand{\arraystretch}{1.05}

\begin{tabular}{@{}C{\fw}*{20}{C{\cw}}@{}}
\toprule
 &
\bench{Abds} & \bench{Abdu} & \bench{Add} & \bench{AddNsw} & \bench{AddNswNuw} & \bench{AddNuw} & \bench{And} & \bench{Ashr} & \bench{AshrExact} & \bench{AvgCeilS} & \bench{AvgCeilU} & \bench{AvgFloorS} & \bench{AvgFloorU} & \bench{Lshr} & \bench{LshrExact} & \bench{Mods} & \bench{Modu} & \bench{Mul} & \bench{Or} & \bench{Sdiv} \\
\midrule
Abd & 60.1 & 59.4 & 58.7 & 42.0 & 45.5 & \textbf{53.9} & \textbf{100.0} & \textbf{65.5} & \textbf{40.1} & \textbf{38.8} & \textbf{38.6} & 39.3 & 37.7 & \textbf{59.3} & \textbf{31.4} & \textbf{64.7} & \textbf{59.0} & 60.6 & \textbf{100.0} & 64.1 \\
No Abd & \textbf{60.7} & \textbf{61.7} & \textbf{63.4} & \textbf{48.8} & \textbf{50.5} & 40.3 & 100.0 & 49.3 & 25.0 & 38.8 & 38.4 & \textbf{40.3} & \textbf{40.2} & 59.3 & 24.4 & 64.7 & 44.6 & \textbf{64.3} & 100.0 & \textbf{67.8} \\
\bottomrule
\end{tabular}

\vspace{-1pt}

\begin{tabular}{@{}C{\fw}*{20}{C{\cw}}@{}}
\toprule
 &
\bench{SdivExact} & \bench{Shl} & \bench{ShlNsw} & \bench{ShlNswNuw} & \bench{ShlNuw} & \bench{Smax} & \bench{Smin} & \bench{SshlSat} & \bench{Sub} & \bench{SubNsw} & \bench{SubNswNuw} & \bench{SubNuw} & \bench{UaddSat} & \bench{Udiv} & \bench{UdivExact} & \bench{Umax} & \bench{Umin} & \bench{UshlSat} & \bench{UsubSat} & \bench{Xor} \\
\midrule
Abd & \textbf{19.3} & \textbf{56.9} & 26.2 & 9.8 & \textbf{40.5} & 63.8 & \textbf{72.8} & \textbf{72.4} & \textbf{60.6} & 47.8 & \textbf{31.8} & 47.1 & 61.8 & \textbf{80.8} & 15.3 & \textbf{90.6} & \textbf{92.9} & \textbf{96.6} & \textbf{52.1} & \textbf{100.0} \\
No Abd & 19.3 & 55.5 & \textbf{27.6} & \textbf{41.8} & 25.4 & \textbf{67.0} & 54.3 & 65.7 & 53.5 & \textbf{57.8} & 14.4 & \textbf{50.0} & \textbf{63.4} & 79.8 & \textbf{17.7} & 73.2 & 62.1 & 92.9 & 19.1 & 100.0 \\
\bottomrule
\end{tabular}
\end{table}


\begin{table}[h]
\centering
\caption{\constantrange results with and without abduction.
The cells have the same meaning as in~\Cref{tab:kb-abduction}.}
\label{tab:cr-abduction}
\footnotesize
\setlength{\tabcolsep}{0pt}
\renewcommand{\arraystretch}{1.05}

\begin{tabular}{@{}C{\fw}*{19}{C{\cw}}@{}}
\toprule
 &
\bench{Abds} & \bench{Abdu} & \bench{Add} & \bench{AddNsw} & \bench{AddNswNuw} & \bench{AddNuw} & \bench{And} & \bench{Ashr} & \bench{AshrExact} & \bench{AvgCeilS} & \bench{AvgCeilU} & \bench{AvgFloorS} & \bench{AvgFloorU} & \bench{Lshr} & \bench{LshrExact} & \bench{Mods} & \bench{Modu} & \bench{Mul} & \bench{Or} \\
\midrule
Abd & \textbf{59.8} & \textbf{75.0} & \textbf{100.0} & \textbf{100.0} & \textbf{70.7} & \textbf{94.0} & \textbf{82.3} & \textbf{98.0} & \textbf{81.7} & 0.3 & 1.8 & 1.8 & 30.0 & \textbf{100.0} & \textbf{85.2} & 43.5 & \textbf{89.0} & \textbf{90.8} & 82.4 \\
No Abd & 59.8 & 67.2 & 36.5 & 68.6 & 61.5 & 81.7 & 76.3 & 95.3 & 81.0 & \textbf{0.7} & \textbf{2.9} & \textbf{2.0} & \textbf{67.8} & 100.0 & 72.5 & \textbf{43.9} & 89.0 & 90.8 & \textbf{82.5} \\
\bottomrule
\end{tabular}

\vspace{-1pt}

\begin{tabular}{@{}C{\fw}*{19}{C{\cw}}@{}}
\toprule
 &
\bench{Sdiv} & \bench{Shl} & \bench{ShlNswNuw} & \bench{ShlNuw} & \bench{Smax} & \bench{Smin} & \bench{SshlSat} & \bench{Sub} & \bench{SubNsw} & \bench{SubNswNuw} & \bench{SubNuw} & \bench{UaddSat} & \bench{Udiv} & \bench{Umax} & \bench{Umin} & \bench{UshlSat} & \bench{UsubSat} & \bench{Xor} & {} \\
\midrule
Abd & \textbf{12.3} & \textbf{83.0} & \textbf{88.3} & 76.4 & \textbf{100.0} & \textbf{100.0} & \textbf{75.5} & \textbf{100.0} & \textbf{100.0} & \textbf{74.5} & 94.5 & \textbf{99.2} & 31.7 & \textbf{100.0} & \textbf{100.0} & \textbf{100.0} & \textbf{100.0} & 56.1 & {} \\
No Abd & 3.7 & 83.0 & 88.3 & \textbf{87.7} & 100.0 & 100.0 & 75.2 & 35.4 & 70.1 & 59.7 & \textbf{100.0} & 85.6 & \textbf{40.5} & 100.0 & 100.0 & 98.6 & 98.8 & \textbf{60.6} & {} \\
\bottomrule
\end{tabular}

\end{table}
}

\end{document}